\newtheorem{defn}{Definition}
\newtheorem{thm}{Theorem}
\newtheorem{lem}{Lemma}
\newtheorem{fact}{Fact}
\DeclareMathOperator*{\argmax}{arg\,max}
\title[PSNE in a Two-Party Policy Competition: Existence and Algorithmic Approaches]{Computing Pure-Strategy Nash Equilibria in a Two-Party Policy Competition: Existence and Algorithmic Approaches}
\author{Chuang-Chieh Lin}
\affiliation{
  \institution{Department of Computer Science and Engineering, National Taiwan Ocean University}
  \city{Keelung City}
  \country{Taiwan}}
\email{josephcclin@mail.ntou.edu.tw}
\author{Chi-Jen Lu}
\affiliation{
  \institution{Institute of Information Science, \\Academia Sinica}
  \city{Taipei City}
  \country{Taiwan}}
\email{cjlu@iis.sinica.edu.tw}
\author{Po-An Chen}
\affiliation{
  \institution{Institute of Information Management, \\National Yang Ming Chiao Tung University}
  \city{Hsinchu City}
  \country{Taiwan}}
\email{poanchen@nycu.edu.tw}
\author{Chih-Chieh Hung}
\affiliation{
  \institution{Department of Management Information Systems, \\National Chung Hsing University}
  \city{Taichung City}
  \country{Taiwan}}
\email{smalloshin@nchu.edu.tw}
\begin{abstract}
We formulate two-party policy competition as a two-player non-cooperative game, generalizing Lin et al.'s work (2021). 
Each party selects a real-valued policy vector as its strategy from a compact subset of Euclidean space, and a voter's utility for a policy is given by the inner product with their preference vector. To capture the uncertainty in the competition, we assume that a policy’s winning probability increases monotonically with its total utility across all voters, and we formalize this via an affine isotonic function.  
A player's payoff is defined as the expected utility received by its supporters. In this work, we first test and validate the isotonicity hypothesis through voting simulations. Next, we prove the existence of a pure-strategy Nash equilibrium (PSNE) in both one- and multi-dimensional settings. Although we construct a counterexample demonstrating the game’s non-monotonicity, our experiments show that a decentralized gradient-based algorithm typically converges rapidly to an approximate PSNE. Finally, we present a grid-based search algorithm that finds an $\varepsilon$-approximate PSNE of the game in time polynomial in the input size and~$1/\varepsilon$. 
\end{abstract}
\keywords{Pure-Strategy Nash equilibrium, Continuous Games, Policy Competition, Fixed-Point, Gradient Dynamics.}
\newcommand{\BibTeX}{\rm B\kern-.05em{\sc i\kern-.025em b}\kern-.08em\TeX}
\begin{document}

%%% The following commands remove the headers in your paper. For final 
%%% papers, these will be inserted during the pagination process.

\pagestyle{fancy}
\fancyhead{}

%%% The next command prints the information defined in the preamble.

\maketitle 

%%%%%%%%%%%%%%%%%%%%%%%%%%%%%%%%%%%%%%%%%%%%%%%%%%%%%%%%%%%%%%%%%%%%%%%%

%%%%%%%%%%%%%%%%%%%%%%%%%%%%%%%%%%%%%%%%%%%%%%%%%%%%%%%%%%%%%%%%%%%
\section{Introduction and Motivation}
\label{sec:intro}
%%%%%%%%%%%%%%%%%%%%%%%%%%%%%%%%%%%%%%%%%%%%%%%%%%%%%%%%%%%%%%%%%%%

Since the 1920s, political competition has been investigated in the paradigm of {\em Spatial Theory of Voting}~\cite{Hot29,Dow57,LRL2007,Pal84,Web92}. 
In such settings, there are two parties and voters with single-peaked preferences over a unidimensional metric space. Each party 
aims to choose a vector that is as close as possible to voters' preferences. In the 1950s, Duverger's law suggested \emph{plurality voting} favors the two-party system~\cite{Duv54}. % with strategic voters involved. 
Dellis~\cite{Del2013} explained why a two-party system emerges under plurality voting. 
These works motivated the investigation of the efficiency of a two-party system in~\cite{LLC2021}, in which a game of two players is formulated, 
each player corresponds to a political party as a collective entity, and each player has a set of candidates to nominate as a (pure) strategy to play.
In~\cite{LLC2021}, society is composed of supporters of party~$A$ and party~$B$, respectively. Unlike the ``winner-takes-all'' assumption, here each candidate benefits the supporters of both its own party and the opposing party. In particular, the winning probability is proportional to the total utility that a candidate brings to all voters. Hence the payoff of a party is the expected utility received by its supporters. Much progress was made in~\cite{LLC2021}, such as the existence of a pure-strategy Nash equilibrium (PSNE), and the lower and upper bounds on the price of anarchy of the game.

In a democratic country, parties compete not only through candidate nomination, but also by policy proposals, 
which can be multi-dimensional by considering various issues. 
This motivates the work here to further investigate the stability of two-party competition via policy proposals. 
In this paper, instead of regarding candidates as the ``discrete'' strategies to play, 
we consider a more generalized setting where a candidate can establish or propose a policy in the two-party competition. 
Here we regard a policy as a real vector in a compact high-dimensional Euclidean subspace. 
For example, if there are $k\in \mathbb{N}$ issues to take into consideration, the policy is a vector in~$\mathbb{R}^k$. 
The utility brought to a voter from a policy is measured by the inner product of the policy and the voter's preference vector. 
Hence, a policy proposed by one party can benefit both its own supporters and the other party's supporters. 
This setting clearly generalizes prior work~\cite{LLC2021}, and indeed, 
it is nontrivial from the perspective of the size of the strategy set, which is infinitely large now. 
Though a mixed-strategy Nash equilibrium always exists in a finite game~\cite{nash_1950,nash_1951}, 
a PSNE is not guaranteed to always exist (e.g., the well-known two-player Matching Pennies game) and 
determining whether a PSNE exists in a finite game of~$n$ players is {\sf NP}-complete~\cite{AGS2011}.  
%Hence, in our infinite game setting, the problem of asking for the existence or computation of a PSNE is even more involved. 

%==================================================================
\subsection{Our Contributions}
\label{subsec:contribution}
%==================================================================

Our results can be positioned as a complementary alternative to classic distance-based spatial models. 
The inner product of a policy with voters' preference vectors encodes directional alignment, which can be \emph{beneficial} or \emph{harmful}, 
and intensity, which reflects \emph{how strongly} a policy affects the voters. 
It differs from distance models that penalize magnitude and do not distinguish the directions. Let us consider the following example. Consider a voter whose preference vector is $\mathbf{q} = (0.1,\,-0.1)$, indicating that she favors increases in education spending (first coordinate) and reductions in taxes (second coordinate). Now compare two policies $\mathbf{z}_B = (0.3,\,-0.3), \mathbf{z}_H = (-0.1,\,0.1)$. Both policies are equally distant from the voter’s preference vector because 
$\|\mathbf{z}_B - \mathbf{q}\| = \|\mathbf{z}_H - \mathbf{q}\| = \sqrt{0.08}$. However, their directional alignment with~$\mathbf{q}$ differs so that the inner products $\mathbf{q}^\top \mathbf{z}_B = 0.06 > 0$ while $\mathbf{q}^\top \mathbf{z}_H = -0.02 < 0$. Thus, although $\mathbf{z}_B$ and $\mathbf{z}_H$ are equally distant from the voter’s preference vector, the inner products reveal that $\mathbf{z}_B$ is beneficial (aligned with preferences), 
whereas $\mathbf{z}_H$ is harmful (misaligned). 
This illustrates how the inner-product formulation captures directional alignment that distance-based models cannot distinguish. 
Secondly, a party faces a dilemma in \emph{satisfying} its supporters by proposing a policy that aligns them or \emph{compromising} to appeal to the whole set of voters in order to increase its chance of winning. To the best of our knowledge, such a point of view has not been addressed before. 
Moreover, classic spatial voting shows instability in the high dimensions under majority rule~\cite{Dug08}. 
By contrast, our objective and isotonicity of winning restore stability by guaranteeing the existence of a PSNE, which provides a predictive benchmark. Our findings provide a computable prediction of equilibrium platforms for real campaigns, given fully observed preference vectors.

Specifically, our contributions are summarized as follows. 
\begin{itemize}
    \item We provide experimental simulations to support the isotonicity hypothesis on the winning probability computation, 
    thereby validating the framework in~\cite{LLC2021} as well.  
    \item We propose a closed-form solution of a PSNE of the two-party policy competition game in the one-dimensional setting. 
    \item We prove that a PSNE exists in the two-party policy competition game even in the multi-dimensional setting. % under a natural \emph{consensus-reachable} condition (Definition~\ref{defn:consensus}). Moreover, even if the voters' profile is \emph{not} consensus-reachable, we show that a PSNE still exists.  
    \item We give a counterexample showing the game is not monotone in general. Since the game is not of zero-sum, and it is neither convex nor concave, the existing gradient ascent and its accelerated variants, such as optimistic gradient ascent and extragradient methods (e.g., see~\cite{farina2022near,cai2023doubly,jelassi2020extragradient,Gorbunov2022a,Gorbunov2022b}), do not necessarily converge to a PSNE.  
    However, our experiments show that a vanilla gradient-based algorithm typically converges reasonably quickly, and most of the convergences correspond to approximate PSNE in practice. %, even though the monotonicity fails in theory. 
    \item We present a polynomial time grid-based search algorithm which discretizes the policy domain to find an $\varepsilon$-approximate PSNE. %The grid width can be determined through a Lipschitz constant.
\end{itemize}

%\iffalse
\paragraph{Organization of this paper.} Related work is summarized in Sect.~\ref{subsec:related_work}. Our model and preliminaries are introduced in Sect.~\ref{sec:models}. The PSNE existence is proved in Sect.~\ref{sec:results_I} and~\ref{sec:PSNE_general_case} for one-dimensional and multi-dimensional settings, respectively. In Sect.~\ref{sec:nonmonotone_nonLipschitz}, monotonicity of the game is disproved, and experimental simulations of a gradient-based algorithm are discussed. Then, in Sect.~\ref{sec:discretization}, we present a grid-based search algorithm to find an approximate PSNE. Finally, future work is discussed in Sect.~\ref{sec:future}.
%\fi

%================================================================
\subsection{Related work}
\label{subsec:related_work}
%================================================================

%----------------------------------------------------------------
\subsubsection{On Candidate Nominations.}
%----------------------------------------------------------------

%This work is an extension of the previous work on two-party election game~\cite{LLC2021}, where each party is considered as a strategic player whose strategies are the candidates it has. 
Lin et al.~\cite{LLC2021} proposed the \emph{two-party election game}, in which each party is modeled as a strategic player whose strategies are its candidates and the payoff is the expected utility received by its supporters. They showed that under the isotonicity hypothesis---that a candidate wins with higher probability if it brings more utility to all voters, the game always admits a PSNE using either a linear or softmax winning-probability function and the price of anarchy w.r.t. PSNE is constantly bounded. In the following paragraphs, we discuss related work %on the spatial theory of voting and the Hotelling-Downs model 
relevant to the game on two-party competition.

\paragraph{On spatial theory of voting.} Dating back to~\cite{Hot29}, political competition was modeled based on {\em Spatial Theory of Voting}~\cite{Hot29,Dow57,Pal84,Web92}, in which society consists of two parties and voters with single-peaked preferences over a one-dimensional metric space. Each party chooses a policy which is as close to voters' preferences as possible. The Spatial Theory of Voting implies that the parties' strategies can be determined by the median voter's preference, when policies are assumed in a one-dimensional space. However, a PSNE is not guaranteed to exist for policies over a multi-dimensional space~\cite{Dug08}.

\paragraph{On the Hotelling-Downs Model.}
%Our work is also related to the 
The Hotelling-Downs model~\cite{Hot29} originates from the problem in which two strategic ice cream vendors along a stretch of beach try to attract as many customers as possible by placing themselves. Parties only care about ``winning'' but not the ``welfare'' of their people. For the variation of the Hotelling-Downs model as such, Harrenstein et al.~\cite{HLST21} showed that computing a Nash equilibrium is 
{\sf NP}-complete in general, but it can be done in linear time when there are only two competing parties. Sabato et al.~\cite{SORR17} considered \emph{real candidacy games}, in which each agent positions itself by selecting a point from the corresponding interval on the real line and then a social choice rule determines the outcome of the competition. They established conditions for the existence of a PSNE. %, yet the computational complexity is not discussed. 

%----------------------------------------------------------------
\subsubsection{On Continuous Game Convergence toward PSNE}
%----------------------------------------------------------------

In this paper, the payoff function of each player is  continuous. Our work  involves algorithmic discussions on finding a Nash equilibrium of a continuous game. We briefly survey related work below. 

\paragraph{Variational Inequality Formulation.} In smooth games (i.e., each player's payoff function is continuously differentiable) with convex strategy sets, a Nash equilibrium can be characterized by a variational inequality (VI)~\cite{Sedlmayer2023,facchinei2003}. Specifically, if $F(x)$ is the concatenated pseudo-gradient of all players, then an equilibrium $x^*$ satisfies $\langle F(x^*),\,x - x^* \rangle \leq 0, \forall x\in \mathcal{S}$, 
where $\mathcal{S}$ is the product of players' feasible sets. By standard results (e.g.~\cite{Rosen1965,facchinei2003}) existence and uniqueness of equilibrium follow under appropriate monotonicity conditions, such as the diagonal strict concavity condition~\cite{Sedlmayer2023,facchinei2003}. In fact, finding an equilibrium reduces to solving the monotone inclusion $0\in F(x)+N_{\mathcal{S}}(x)$ for the normal cone~$N_{\mathcal{S}}$~\cite{Sedlmayer2023}.

\paragraph{Gradient and Extragradient Methods.} Gradient-based dynamics are a natural approach to finding equilibria. In unconstrained problems, na\"{i}ve simultaneous gradient descent-ascent (GDA) may oscillate or diverge in general continuous games~\cite{Mertikopoulos2018,Gidel2018}.  Under the VI framework, more robust methods are used, such as Gorbunov et al.'s extragradient method~\cite{Gorbunov2022a} and Popov's optimistic gradient method~\cite{popov1980,
Gorbunov2022b} are designed for monotone VIs. These methods are typically replaced by projected variants in the constrained setting.

\paragraph{Monotonicity vs.~cocoercivity.} %A central distinction is between \emph{monotonicity} and~\emph{cocoercivity}. 
A map is monotone if $\langle F(x)-F(y),x-y\rangle\ge0$ for all $x,y$ in the domain. Monotonicity holds in concave and zero-sum games~\cite{Sedlmayer2023}. Under mere monotonicity, standard analysis guarantees $O(1/T)$ convergence of averaged iterates via extragradient~\cite{Gorbunov2022a}. cocoercivity is a stronger assumption (roughly $F$ is inverse-Lipschitz) that implies Lipschitz continuity.  Intuitively, $F$ is $\beta$-cocoercive if $\langle F(x)-F(y),x-y\rangle \geq \beta\|F(x)-F(y)\|^2$,
a condition that often yields faster convergence~\cite{BauschkeCombettes2011}. Lin et al. \cite{lin2020} introduced the class of $\lambda$-cocoercive games and showed that online gradient descent (OGD) converges in the last iterate under this condition~\cite{lin2020}.

%%%%%%%%%%%%%%%%%%%%%%%%%%%%%%%%%%%%%%%%%%%%%%%%%%%%%%%%%%%%%%%%%%%%%%
\section{The Model and Preliminaries}
\label{sec:models}
%%%%%%%%%%%%%%%%%%%%%%%%%%%%%%%%%%%%%%%%%%%%%%%%%%%%%%%%%%%%%%%%%%%%%%

We consider a game modeling the competition of two parties that compete by proposing policies 
to society which consists of voters regarded as supporters of the two parties. 
Let~$A$ and~$B$ be the two parties which are regarded as two players in a game, 
who update their respective policies $\mathbf{z}_A, \mathbf{z}_B\in S\subset \mathbb{R}^k$ (policy vectors) iteratively, 
where $S := \{\mathbf{z}\in [-1, 1]^k: \|\mathbf{z}\|\leq 1\}$ covers the policy domain. 
The value in each dimension can be viewed as a spectrum from far-left to far-right. 
%If it is not elsewhere designated, we assume that $\|\mathbf{z}_A\|\leq 1$ and $\|\mathbf{z}_B\|\leq 1$, 
%which is in line with the budget constraint or the bounded rationality concern. 
Let $\mathbf{z} := (\mathbf{z}_A, \mathbf{z}_B)$ be a state (or profile) of the game. 
Denote the supporters of~$A$ and $B$ by~$V_A$ and~$V_B$, respectively, and 
$V = V_A\dot{\cup} V_B$ be the set of all the voters where $V_A\cap V_B = \emptyset$ and $|V| = n$. We represent the \emph{preference vector} of a voter~$v\in V$ by~$\mathbf{q}_v\in S$. 
Then, for $X\in \{A, B\}$, we define the utility $u_X(\mathbf{z}_X) = \sum_{v\in V_X}\langle \mathbf{z}_X, \mathbf{q}_v\rangle$, 
which is the utility that strategy $\mathbf{z}_X$ provides to the supporters of~$X$ and~$\{A,B\}\setminus X$ respectively. 
To further simplify our discussion, we use $Q_X:= \sum_{v\in V_X}\mathbf{q}_v$ and $Q:= \sum_{v\in V}\mathbf{q}_v$. 
Hence, $u_A(\mathbf{z}_A) = \langle \mathbf{z}_A, Q_A\rangle = \mathbf{z}_A^{\top} Q_A$, 
$u_B(\mathbf{z}_A) = \langle \mathbf{z}_A, Q_B\rangle =  \mathbf{z}_A^{\top} Q_B$, 
$u_A(\mathbf{z}_B) = \langle \mathbf{z}_B, Q_A\rangle = \mathbf{z}_B^{\top} Q_A$, and 
$u_B(\mathbf{z}_B) = \langle \mathbf{z}_B, Q_B\rangle = \mathbf{z}_B^{\top} Q_B$. 
To facilitate our discussion, we assume that $\|Q_A\|, \|Q_B\|\leq 1$ as a normalization. 

%==============================================================
\subsection{Justification of the Isotonicity of the Winning Probability}
%==============================================================

Lin et al.~\cite{LLC2021} suggested a two-party election game model in which two parties compete in the election campaign as a two-player game by choosing a candidate, who brings utility for its supporters and the non-supporters, against each other. They raised a hypothesis that a candidate wins with higher odds if it brings more utility to all the voters. We call this monotone concept \emph{isotonicity}. %They proposed a linear function (inspired from the dueling bandit~\cite{AJK2014}) which is linear in the difference between the utility to all the voters from the candidates (resembling the function $p_A$ above). 
Below, we verify such a hypothesis on the isotonicity of a winning probability.

We present a preliminary experiment designed to demonstrate an isotonic relationship between utility differences and winning probabilities. The experiment models a simplified election scenario involving two political parties, $A$ and $B$, each proposing a one-dimensional policy vector with values ranging from~$-1$ to~$1$. Consider a population of $100$ voters, each described by a one-dimensional preference vector drawn from a uniform distribution. Each voter $v$'s utility for each party's proposed policy is calculated as the inner product between the voter's preference vector and the respective party's policy vector~$z$. Let $u_v(z_A)$ and $u_v(z_B)$ represent the calculated utilities for the policies $z_A$ and $z_B$ from parties~$A$ and~$B$, respectively.

To simulate voting behavior, we apply three distinct criteria: (1) \emph{Hardmax} criterion, where party~$A$ receives a vote from~$v$ if $u_v(z_A) > u_v(z_B)$ and the vote goes to either~$A$ or~$B$ by a fair-coin flip for the tie-breaking; (2) \emph{Linear} criterion, where party~$A$ receives a vote from~$v$ with probability $\tfrac12+ (u_v(z_A) - u_v(z_B))/(2\xi)$; and (3) \emph{Softmax} criterion (or \emph{Sigmoid}), where party~$A$ receives a vote from $v$ with probability $e^{u_v(z_A)/\xi}/(e^{u_v(z_A)/\xi} + e^{u_v(z_B)/\xi})$, for a normalization factor~$\xi = 0.01$. Preference vectors for voters are drawn from a uniform distribution, for which the parameters are set with $\mu \in [-0.005,0.005]$ to fit the normalized constraint on $\|Q\| = \|Q_A+Q_B\|$, and recalculated for each trial. The simulation consists of $10,000$ independent trials to ensure statistical robustness.

From experimental results (see Figure~\ref{fig:util-prob}), we observe a clear isotonic relationship between the utility difference and the probability of winning. Specifically, the probability of party~$A$ winning tends to increase monotonically with the increasing total utility difference in favor of party~$A$. This experimental insight strongly supports the hypothesis that voters' aggregated preferences significantly influence election outcomes, and that an isotonic relationship exists between utility differences and winning probabilities. Similar results can be obtained for preference vectors drawn from a Gaussian distribution (see Appendix~A).

\begin{figure}[htbp]
    %\begin{subfigure}[b]{0.48\textwidth}
        \includegraphics[width=0.475\textwidth]{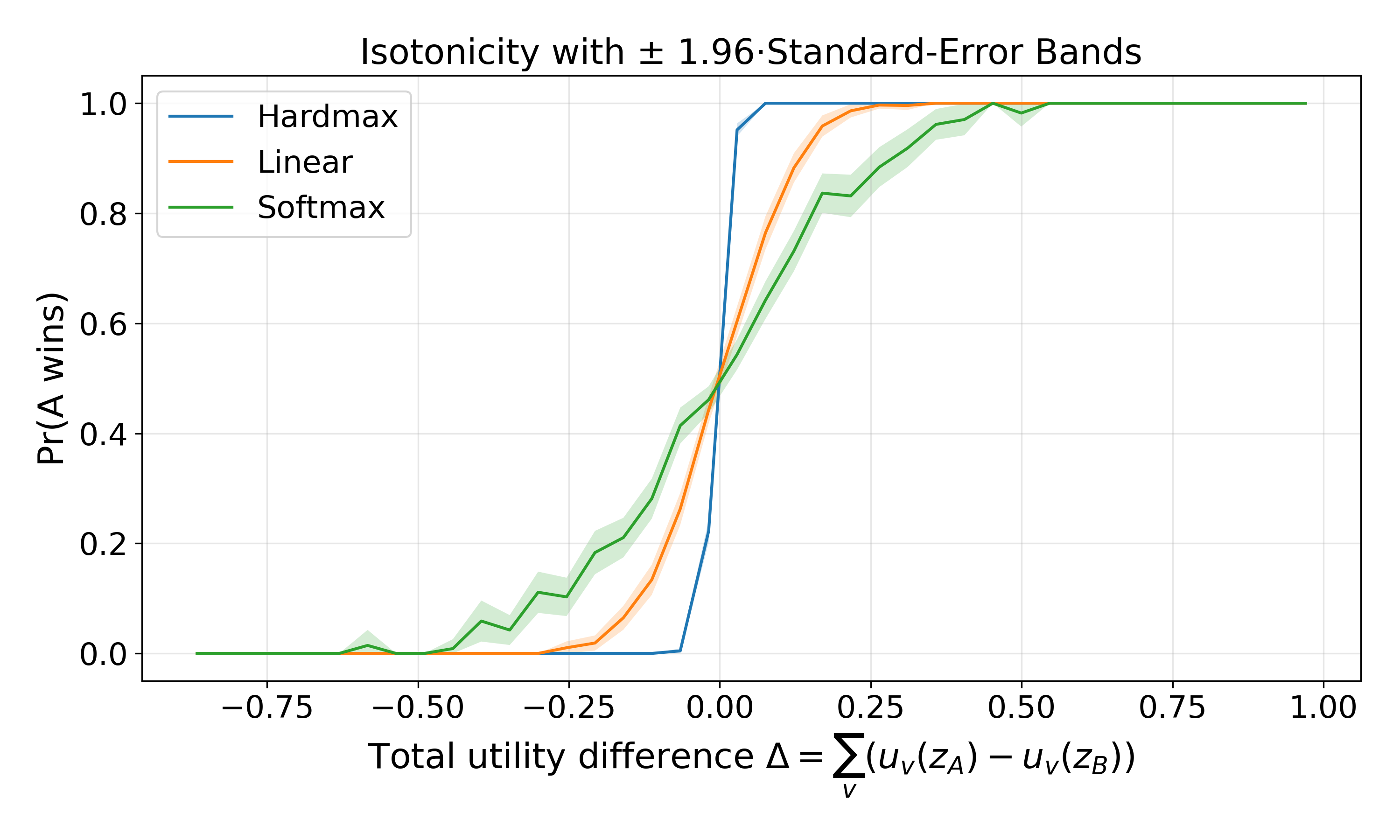}
        %\caption{Softmax / Uniform}
        %\label{fig:softmax-uni}
    %\end{subfigure}
    %\hfill
    %\begin{subfigure}[b]{0.48\textwidth}
    %    \includegraphics[width=0.85\textwidth]{img/normal-softmax.png}
    %    \caption{Softmax / Gaussian}
    %    \label{fig:softmax-normal}
    %\end{subfigure}
    \caption{Utility difference vs. probability that party~$\bm{A}$ wins.}% Voters' preference are sampled in a uniform distribution.} %The caption: ``voting approach / preference distribution".}
    \label{fig:util-prob}
\end{figure}

\paragraph{A dueling-bandit perspective.} The macro perspective of the chance of winning in the election can be captured by the micro perspective of experimental 
simulations of voters' voting, and the result of the competition is determined by the utility received by the voters. These observations can be associated with the classic dueling-bandit setting in~\cite{AJK2014,du2020dueling,LLC2021}, in which we may regard a party as an arm and a better arm can only be realized with uncertainty during the competition in which the rewards of the arms are determined by the policies proposed. Hence, we model the \emph{probability of winning in a competition} as a linear link function as follows. 
Let $p_{A}$ and $p_{B} = 1- p_{A}$ denote the winning probability of party~$A$ and~$B$ respectively, where 
\begin{equation}
p_{A} = \frac{1}{2} + \frac{1}{8}(\mathbf{z}_A-\mathbf{z}_B)^{\top}Q, \;\;\,\label{eq:prob}
p_{B} = 1 - p_{A}.
\end{equation}
Here $\tfrac18$ is a normalization factor to ensure the winning probability stays in~$[0,1]$. This function maps the \emph{total utility difference}, that is, $\mathbf{z}_A^{\top}Q-\mathbf{z}_B^{\top}Q$, to a probability. Such a linear function is not only isotonic but also continuously differentiable, hence serves a good meta approximation for the payoffs of the players.

%==============================================================
\subsection{The Payoff Functions and Their Derivatives}
%==============================================================

%Our experimental results, which formulate the winning probability at a ``micro'' (i.e., individual) level, 
%support Lin et al.~\cite{LLC2021}'s isotonicity hypothesis at the ``macro'' (i.e., aggregate) level. 
%Hence, we apply Equation~(\ref{eq:prob}) for computing the winning probability in our game. 
Because the outcome of the competition is random, we define the \emph{payoff} functions $R_A(\mathbf{z})$ of party~$A$ 
in terms of the expected utility received by its supporters. That is, 
\begin{align*}
& R_A(\mathbf{z}) = p_{A}\cdot 
\mathbf{z}_A^{\top} Q_A + (1-p_{A})\cdot \mathbf{z}_B^{\top} Q_A\\
& = \tfrac12(\mathbf{z}_A^{\top}Q_A+\mathbf{z}_B^{\top}Q_A) + \tfrac18(\mathbf{z}_A^{\top}Q-\mathbf{z}_B^{\top}Q)(\mathbf{z}_A^{\top}Q_A-\mathbf{z}_B^{\top}Q_A).
\end{align*}
$R_B(\mathbf{z})$ can be defined similarly. 
For $\mathbf{z}_A, \mathbf{z}_B\in S$, the maximizer of $R_A(\mathbf{z})$ and $R_B(\mathbf{z})$ can be computed in a decentralized way, that is, 
$\argmax_{\mathbf{z}_A\in S}R_A(\mathbf{z})$ and $\argmax_{\mathbf{z}_B\in S}R_B(\mathbf{z})$. The gradient of $R_A(\mathbf{z})$ w.r.t.~$\mathbf{z}_A$ can be derived as 
\[
\pdv{R_A(\mathbf{z})}{\mathbf{z}_A} %= 
%p_{A}\pdv{u_A(\mathbf{z}_A)}{\mathbf{z}_A} + u_A(\mathbf{z}_A)\pdv{p_{A}}{\mathbf{z}_A} 
% + (1-p_{A}) \pdv{u_A(\mathbf{z}_B)}{\mathbf{z}_A} - u_A(\mathbf{z}_B) \pdv{p_{A}}{\mathbf{z}_A}\\
%&= p_{A}Q_A\! + \! \frac{1}{8} u_A(\mathbf{z}_A)Q \!-\! \frac{1}{8}u_A(\mathbf{z}_B)Q \\
%&=& \frac{1}{2} Q_A + \frac{u(\mathbf{z}_A)-u(\mathbf{z}_B)}{4}Q_A + \frac{u_A(\mathbf{z}_A)-u_A(\mathbf{z}_B)}{4} Q\\
= \tfrac12 Q_A + \tfrac18(\mathbf{z}_A^{\top}Q-\mathbf{z}_B^{\top}Q) Q_A + \tfrac18(\mathbf{z}_A^{\top} Q_A-\mathbf{z}_B^{\top} Q_A) Q.
\] $\pdv{R_B(\mathbf{z})}{\mathbf{z}_B}$ can be obtained similarly. %, we obtain 
%\[
%\pdv{R_B(\mathbf{z})}{\mathbf{z}_B} = 
%\tfrac12 Q_B + \tfrac18((\mathbf{z}_B-\mathbf{z}_A)^{\top} Q) Q_B + \tfrac18((\mathbf{z}_B-\mathbf{z}_A)^{\top} Q_B) Q.
%\]
\iffalse
In terms of social utility, we can rewrite the gradients as 
\begin{align*}
    & \pdv{R_A(\mathbf{z})}{\mathbf{z}_A} = \tfrac12Q_A + \tfrac18(u(\mathbf{z}_A)-u(\mathbf{z}_B))Q_A 
     + \tfrac18(u_A(\mathbf{z}_A)-u_A(\mathbf{z}_B))Q, \\
    & \pdv{R_B(\mathbf{z})}{\mathbf{z}_B} = \tfrac12Q_B + \tfrac18(u(\mathbf{z}_B)-u(\mathbf{z}_A))Q_B 
     + \tfrac18(u_B(\mathbf{z}_B)-u_B(\mathbf{z}_A))Q.
\end{align*}
\fi
Hence, we derive the second-order derivatives % in the unconstrained setting, 
\begin{align*}
    &\frac{\partial^2 R_X(\mathbf{z})}{\partial\mathbf{z}_X^2}[i,j] = \frac{1}{8}\left(Q[i]Q_X[j]+Q[j]Q_X[i]\right), 
    %&\frac{\partial^2 R_B(\mathbf{z})}{\partial\mathbf{z}_B^2}[i,j] = \frac{1}{8}\left(Q[i]Q_B[j]+Q[j]Q_B[i]\right). 
\end{align*}
for $X\in \{A, B\}$. Thus, the payoff function %for each player 
is not necessarily convex or concave, as seen from the second-order derivatives.

Without loss of generality, we restrict attention to profiles satisfying the rationality condition called \emph{egoistic property}, defined below.

\begin{defn}[egoistic property]\label{defn:egoism}
We say that the state $\mathbf{z}$ of the two-party policy competition game satisfies the egoistic property if $\mathbf{z}_A^{\top} Q_A\geq \mathbf{z}_B^{\top} Q_A$ and $\mathbf{z}_B^{\top} Q_B\geq \mathbf{z}_A^{\top} Q_B$. That is, each party's policy provides no less utility to its own supporters than the opponent's policy. 
\end{defn}

\paragraph*{Remark.} Suppose contrarily that $\mathbf{z}_B^{\top} Q_A > \mathbf{z}_A^{\top}Q_A$, then party~$A$ could switch to the opponent's policy ~$\mathbf{z}_B$, which would only increase party~$A$'s payoff. Indeed, by such a switch the winning probability of party~$A$ becomes $1/2$, increases its payoff by~$\mathbf{z}_B^{\top}Q_A - (p_{A}\cdot\mathbf{z}_A^{\top}Q_A + p_{B}\cdot\mathbf{z}_B^{\top}Q_A)\geq \mathbf{z}_B^{\top}Q_A - (p_{A}\cdot\mathbf{z}_B^{\top}Q_A + p_{B}\cdot\mathbf{z}_B^{\top}Q_A) = 0$ since $p_{A}+p_{B} = 1$. Hence, egoistic property must hold in a PSNE. 
%It holds similarly for party~$B$.

%%%%%%%%%%%%%%%%%%%%%%%%%%%%%%%%%%%%%%%%%%%%%%%%%%%%%%%%%%%%%%%%
\section{The Existence of PSNE in the One-Dimensional Setting}
\label{sec:results_I}
%%%%%%%%%%%%%%%%%%%%%%%%%%%%%%%%%%%%%%%%%%%%%%%%%%%%%%%%%%%%%%%%

For $k=1$, the policies $z_A, z_B$ lie at $[-1, 1]\subset \mathbb{R}$. For party~$A$, we have 
%\begin{equation*}
    $R_A(\mathbf{z}) = \tfrac12(z_A+z_B)Q_A + \tfrac18QQ_A(z_A-z_B)^2$,
%\end{equation*}
where $\mathbf{z} = (z_A, z_B)$ denotes the profile (or state) of the game. %\footnote{Since $z_A$ and $z_B$ here are real numbers, we do not use the bold-face to denote them.}. 
Then, we have \[%\begin{eqnarray*}
    %& & 
    \odv{R_A(\mathbf{z})}{z_A} = \tfrac12 Q_A + \tfrac14QQ_A(z_A-z_B),\;\; %\mbox{ and }\; 
    %& & 
    \odv[order={2}]{R_A(\mathbf{z})}{z_A} = \tfrac14QQ_A.\]
%\end{eqnarray*} 

\begin{thm}\label{thm:degenerate_NE}
The two-party policy competition game in the one-dimensional setting has a PSNE in a closed form.  
\end{thm}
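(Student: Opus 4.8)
The plan is to exhibit the explicit profile $\mathbf{z}^* = (z_A^*, z_B^*)$ with $z_A^* = \operatorname{sgn}(Q_A)$ and $z_B^* = \operatorname{sgn}(Q_B)$ (breaking ties toward $+1$ when a coefficient vanishes), and to show that it is in fact a \emph{weakly dominant-strategy} equilibrium, which is far stronger than merely a PSNE. Since $R_A(\cdot, z_B)$ is a one-variable quadratic in $z_A$ with leading coefficient $\tfrac18 Q Q_A$ and second derivative $\tfrac14 Q Q_A$, the first reduction is to argue that for every fixed $z_B$ the maximizer over $[-1,1]$ is attained at a corner $z_A \in \{-1,+1\}$; the best response then collapses to a comparison of two numbers.

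To establish the corner claim I would split on the sign of $QQ_A$. When $QQ_A \ge 0$ the quadratic is convex or affine, so its maximum over the interval is automatically at an endpoint. The delicate case is $QQ_A < 0$, where $R_A(\cdot, z_B)$ is strictly concave and could \emph{a priori} be maximized at the interior stationary point $z_A = z_B - 2/Q$. Here the normalization $\|Q_A\|,\|Q_B\| \le 1$ does the work: if $Q_A > 0$ and $Q < 0$ then $Q = Q_A + Q_B > Q_B \ge -1$, forcing $|Q| < 1$ (and symmetrically when the signs are reversed), so that $|z_A - z_B| = 2/|Q| > 2$, which is infeasible for $z_A, z_B \in [-1,1]$. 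Hence even in the concave regime the constrained maximizer sits at a corner. This infeasibility step is the main obstacle of the proof: it is the one place where the magnitude constraint on $Q_A, Q_B$ is genuinely needed, and skipping it would leave open a spurious interior-equilibrium case.

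With the maximizer pinned to a corner, I would finish via the finite-difference identity $R_A(1, z_B) - R_A(-1, z_B) = Q_A\bigl(1 - \tfrac12 Q z_B\bigr)$. Because $|Q| \le |Q_A| + |Q_B| \le 2$ and $|z_B| \le 1$, the factor $1 - \tfrac12 Q z_B$ is nonnegative, so the sign of the difference equals $\operatorname{sgn}(Q_A)$ regardless of $z_B$; thus $z_A^* = \operatorname{sgn}(Q_A)$ maximizes $R_A(\cdot, z_B)$ no matter what the opponent plays (and when $Q_A = 0$ the payoff vanishes identically, so $A$ is indifferent). The identical computation for $B$ yields $R_B(z_A, 1) - R_B(z_A, -1) = Q_B\bigl(1 - \tfrac12 Q z_A\bigr)$, giving $z_B^* = \operatorname{sgn}(Q_B)$ as a weakly dominant choice. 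Since each coordinate is a best response to every strategy of the other player, $\mathbf{z}^*$ is a PSNE, and it is given in closed form, completing the argument.
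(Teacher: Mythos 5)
Your proposal is correct, and it takes a genuinely different route from the paper. The paper restricts attention to egoistic profiles, splits on the sign of $QQ_A$, and in the convex case uses the egoistic property to sign the derivative $\tfrac12 Q_A + \tfrac14 QQ_A(z_A-z_B)$; in the concave case it first pins down the opponent's strategy ($z_B=1$ when $Q>0>Q_A$), then solves the first-order condition and clips it to $[-1,1]$, producing a possibly interior answer $z_A = 1-\tfrac{2}{Q}$ in the branch $Q\geq 1$. You instead prove that the sign profile $(\operatorname{sgn}(Q_A),\operatorname{sgn}(Q_B))$ is a \emph{weakly dominant-strategy} equilibrium: the corner reduction (convexity when $QQ_A\geq 0$; infeasibility of the stationary point $z_A = z_B - 2/Q$ when $QQ_A<0$, since the normalization forces $|Q|<1$ and hence $2/|Q|>2$) plus the finite-difference identity $R_A(1,z_B)-R_A(-1,z_B)=Q_A\bigl(1-\tfrac12 Qz_B\bigr)$ with nonnegative second factor. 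Both computations check out, and your argument buys three things the paper's does not: a stronger solution concept, best responses that are opponent-independent (so no fixed-point or sequential reasoning is needed), and no reliance on the egoistic restriction. What the paper's version buys in exchange is robustness to dropping the normalization: your infeasibility step is exactly where $|Q_A|,|Q_B|\leq 1$ is essential, and without it the interior stationary point can be feasible — which is precisely the paper's $Q\geq 1$ branch (vacuous under the stated normalization, so the two closed forms agree on the actual model). It would strengthen your write-up to note this agreement explicitly, i.e., that your dominant-strategy profile reproduces the paper's case-by-case solution wherever the cases are nonempty.
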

%\iffalse
\begin{proof}
Consider the two-party policy competition game in the one-dimensional setting. WLOG, below we consider the cases w.r.t. party~$A$ and those w.r.t. party~$B$ can be obtained similarly. 
\begin{enumerate}
    \item [(1)] If $QQ_A\geq 0$, then $R_A(\mathbf{z})$ is convex since $\odv[order={2}]{R_A(\mathbf{z})}{z_A}\geq 0$. Assume that $Q\geq 0$ and $Q_A\geq 0$. The egoistic property guarantees that $z_A Q_A\geq z_B Q_A$ and hence we have $\odv{R_A(\mathbf{z})}{z_A}\geq 0$. Therefore, we derive that setting the policy $z_A = 1$ maximizes the payoff of party~$A$. Next, consider the case that $Q\leq 0$ and $Q_A\leq 0$. The egoistic property leads to the fact that $Q_A(z_A-z_B)\geq 0$. Since $Q_A\leq 0$, we have $z_A\leq z_B$. Moreover, it is clear that $QQ_A(z_A-z_B)\leq 0$ and hence $\odv{R_A(z)}{z_A}\leq 0$. Therefore, setting $z_A = -1$ maximizes the payoff of party~$A$.
    \item [(2)] If $QQ_A < 0$, then $R_A(\mathbf{z})$ is concave since $\odv[order={2}]{R_A(\mathbf{z})}{z_A} < 0$. Assume that $Q > 0$ and $Q_A < 0$. It is clear that we must have $Q_B > 0$ otherwise $Q$ can never be positive. By the arguments in~(1), we know that $z_B = 1$. Since $R_A(\mathbf{z})$ w.r.t.~$z_A$, we then focus on solving $\odv{R_A(z)}{z_A} = 0$, which implies that $z_A = -\frac{1}{2}Q_A\left(\frac{4}{QQ_A}\right)+1 = 1-\frac{2}{Q}$.  
    %\iffalse
    Fit $z_A$ to the feasible space $S = [-1, 1]$, we derive $z_B = 1$ and \[
    z_A = \left\{\begin{array}{ll}
    %\min\left\{1, 1-\frac{1}{Q}\right\} & \mbox{ if } Q < 0\\
    -1 & \mbox{ if } 0 < Q < 1\\
    1-\frac{2}{Q} & \mbox{ if } 1\leq Q
    \end{array}\right..
    \] 
    %\fi
    Similarly, for the case that $Q < 0$ and $Q_A > 0$, we obtain $z_B = -1$ and 
    \[
    z_A = \left\{\begin{array}{ll}
    1 & \mbox{ if } -1\leq Q < 0\\
    -1-\frac{2}{Q} & \mbox{ if } Q \leq -1
    \end{array}\right..
    \] 
\end{enumerate}
\end{proof}
%\fi

%%%%%%%%%%%%%%%%%%%%%%%%%%%%%%%%%%%%%%%%%%%%%%%%%%%%%%%%%%%%%%%%
\section{The Existence of PSNE in the Multi-Dimensional Setting}
\label{sec:PSNE_general_case}
%%%%%%%%%%%%%%%%%%%%%%%%%%%%%%%%%%%%%%%%%%%%%%%%%%%%%%%%%%%%%%%%

In this section,  we consider the general case that $k\geq 1$. 
%First, let us revisit the payoff function for party~$A$ as follows.  
%$$R_A(\mathbf{z}) = \tfrac12(\mathbf{z}_A+\mathbf{z}_B)^{\top} Q_A + \tfrac18(\mathbf{z}_A^{\top} Q-\mathbf{z}_B^{\top} Q) (\mathbf{z}_A^{\top} Q_A-\mathbf{z}_B^{\top} Q_A).$$
We claim that it is sufficient for party~$A$ to consider the space $\mbox{span}(\{Q_A, Q_B\}) := \{r_1Q_A+r_2Q_B: r_1, r_2\in \mathbb{R}\}$. Indeed, for a policy $\mathbf{z}_A\in S\subset \mathbb{R}^k$ % = [-1,1]^{k}$ 
which is not in~$\mbox{span}(\{Q_A, Q_B\})$, there must exist a vector $\mathbf{v}\in S$ such that $\mathbf{z}_A = \hat{\mathbf{z}}_A + \mathbf{v}$ where $\hat{\mathbf{z}}_A\in \mbox{span}(\{Q_A, Q_B\})$ and $\langle \mathbf{v}, \mathbf{n}\rangle > 0$ for a normal vector $\mathbf{n}$ perpendicular to any vector in~$\mbox{span}(\{Q_A, Q_B\})$. Denote by $\mathbf{v}_{\mathbf{n}}$ the vector parallel to $\mathbf{n}$. Clearly, $\mathbf{v}_{\mathbf{n}}$ will not affect the payoff~$R_A(\mathbf{z})$.

\begin{fact}\label{fact:inner_product}
For two vectors $\mathbf{u},\mathbf{v}\in \mathbb{R}^k$, $k\geq 1$, if the dot product is applied, then $\langle \mathbf{u}, \mathbf{v}\rangle = \|\mathbf{u}\|\|\mathbf{v}\|\cos(\theta)$,
where $\theta$ is the angle between $\mathbf{u}$ and $\mathbf{v}$ and $\|\cdot\|$ represents the 2-norm.
\end{fact}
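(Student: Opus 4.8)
The plan is to establish the identity by reducing it to the law of cosines applied to the triangle whose three sides are $\mathbf{u}$, $\mathbf{v}$, and $\mathbf{u}-\mathbf{v}$. First I would expand $\|\mathbf{u}-\mathbf{v}\|^2$ purely algebraically, using the bilinearity and symmetry of the inner product together with $\langle \mathbf{w},\mathbf{w}\rangle = \|\mathbf{w}\|^2$, to obtain $\|\mathbf{u}-\mathbf{v}\|^2 = \|\mathbf{u}\|^2 - 2\langle \mathbf{u},\mathbf{v}\rangle + \|\mathbf{v}\|^2$. This step needs nothing beyond the fact that $\langle\cdot,\cdot\rangle$ is the standard dot product inducing the $2$-norm.

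Second, I would invoke the geometric law of cosines for the same triangle, which gives $\|\mathbf{u}-\mathbf{v}\|^2 = \|\mathbf{u}\|^2 + \|\mathbf{v}\|^2 - 2\|\mathbf{u}\|\|\mathbf{v}\|\cos(\theta)$, where $\theta$ is the angle at the vertex shared by $\mathbf{u}$ and $\mathbf{v}$. Equating the two expressions for $\|\mathbf{u}-\mathbf{v}\|^2$ and cancelling the common $\|\mathbf{u}\|^2 + \|\mathbf{v}\|^2$ immediately yields $\langle \mathbf{u},\mathbf{v}\rangle = \|\mathbf{u}\|\|\mathbf{v}\|\cos(\theta)$, as claimed.

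The only subtlety, and hence the (minor) main obstacle, is to make sense of the angle $\theta$ when $k>2$. The clean way is to observe that both vectors lie in the at-most-two-dimensional subspace $\mbox{span}(\{\mathbf{u},\mathbf{v}\})$, on which the planar notion of angle and the law of cosines apply verbatim; alternatively, one may \emph{define} $\theta$ through the relation itself, in which case the Cauchy--Schwarz inequality $|\langle \mathbf{u},\mathbf{v}\rangle|\leq \|\mathbf{u}\|\|\mathbf{v}\|$ guarantees that the ratio $\langle \mathbf{u},\mathbf{v}\rangle/(\|\mathbf{u}\|\|\mathbf{v}\|)$ lies in $[-1,1]$, so a unique $\theta\in[0,\pi]$ exists. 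Either route handles the degenerate case where $\mathbf{u}$ or $\mathbf{v}$ vanishes by the usual convention. Since this is a textbook identity, I expect the whole argument to be only a few lines.
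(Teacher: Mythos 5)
Your proof is correct. Note, however, that the paper offers no proof of this statement at all: it is labeled a \emph{Fact} and invoked as a textbook identity, so there is no argument of the paper's to compare yours against. Your law-of-cosines route is the standard way to establish it when the angle is given a prior geometric meaning, and you correctly isolate the only real subtlety for $k>2$, namely that $\theta$ must be interpreted inside the at-most-two-dimensional subspace $\mbox{span}(\{\mathbf{u},\mathbf{v}\})$, where the planar law of cosines applies. One remark on your alternative route: if one instead \emph{defines} $\theta$ by $\cos(\theta) = \langle \mathbf{u},\mathbf{v}\rangle/(\|\mathbf{u}\|\|\mathbf{v}\|)$ (justified by Cauchy--Schwarz), the statement ceases to be a theorem and becomes a definition, which is arguably how the paper implicitly treats it; in that reading nothing needs proving, but then your first route is the one that shows the analytic definition agrees with the synthetic geometric notion of angle. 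Either way the degenerate case $\mathbf{u}=\mathbf{0}$ or $\mathbf{v}=\mathbf{0}$ is handled by convention, as you say, since both sides vanish.
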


\iffalse
\begin{fact}\label{fact:cosine_additive}
For two angles $\theta_1, \theta_2$, we have \begin{eqnarray*}
    &&\cos(\theta_1+\theta_2) = \cos(\theta_1)\cos(\theta_2) - \sin(\theta_1)\sin(\theta_2)\\
    &&\cos(\theta_1-\theta_2) = \cos(\theta_1)\cos(\theta_2)+ \sin(\theta_1)\sin(\theta_2)
\end{eqnarray*}
\end{fact}
\fi

As we can focus on $\mathbf{z}_A,\mathbf{z}_B\in \mbox{span}(\{Q_A, Q_B\})$, $\mathbf{z}_A$ and $\mathbf{z}_B$ can be represented by the polar coordinates as $(r_A, \theta_A)$ and $(r_B, \theta_B)$, respectively, where $r_A = \|\mathbf{z}_A\|, r_B = \|\mathbf{z}_B\|$, respectively, and $\theta_A, \theta_B>0$ are the angles between~$Q_A$ and~$\mathbf{z}_A$ and the angles between~$Q_B$ and~$\mathbf{z}_B$, respectively. %, such that $\theta_A$ rotates $\mathbf{z}_A$ from~$Q_A$ towards~$Q$ and $\theta_B$ rotates $\mathbf{z}_B$ from~$Q_B$ towards~$Q$. 
Then, we have $$p_A = \tfrac12+\tfrac18\left(r_A\|Q\|\cos(\rho_A-\theta_A) - r_B\|Q\|\cos(\rho_B-\theta_B)\right),$$  and $R_A(\mathbf{z}) = p_A(r_A\|Q_A\|\cos(\theta_A))+(1-p_A)(r_B\|Q_A\|\cos(\rho_A+\rho_B-\theta_B))$. 
Note that $(r_A, \theta_A)$ can be transformed back to $\mathbf{z}_A\in S$ by multiplying a rotation matrix 
to~$Q_A$. 

\paragraph{Remark.} We observe that the variables in the payoff function $R_A(\mathbf{z})$ are of $k$-dimensional, and hence the gradients and second-order derivatives of~$R_A(\mathbf{z})$ are vectors and a Hessian matrix, respectively. In polar coordinates, one may reduce to considering only two variables, i.e., $r_A$ and $\theta_A$, which significantly simplifies the analysis.  

%=============================================================
\subsection{Consensus-Reachable Case}
\label{subsec:consensus-reachable}
%=============================================================

Before stepping into the following discussions, let us define the following property which will be considered as an assumption in this section. 
\begin{defn}\label{defn:consensus}
    A two-party policy competition game is \emph{consensus-reachable} if $Q_A^{\top}Q\geq 0$ and $Q_B^{\top}Q\geq 0$.
\end{defn}
The term ``consensus-reachable" comes from the fact that the consensus $Q = Q_A+Q_B$ has positive cosine similarity between it and~$Q_A$ or~$Q_B$. Hence, each party's supporters have preferences that are not too ``far-apart''.
%The example in Figure~\ref{fig:policies_and_preferences} illustrates an instance of a consensus-reachable two-party policy competition game. 
Under this assumption, %the angle between~$Q$ and~$Q_A$ and that between $Q$ and $Q_B$ are both upper bounded by~$\pi/2$, which means 
any policy of each party brings nonnegative utility not only to their own supporters but also to all the voters.

\iffalse
\begin{figure}[ht]
    \centering
    \includegraphics[width=0.8\columnwidth]{policy_and_voter_vectors.png}
    \caption{Policies $\mathbf{z}_A, \mathbf{z}_B$ and aggregations of voters' preference vectors $Q_A$, $Q_B$ and $Q = Q_A+Q_B$. Note that $\rho_A, \rho_B\leq \pi/2$ if the game is consensus-reachable.}
    \label{fig:policies_and_preferences}
\end{figure}
\fi

\begin{lem}\label{lem:consensus-norm}
If the two-party policy competition game is consensus-reachable, then the two players must choose $r_A = r_B = 1$ to maximize their payoffs.    
\end{lem}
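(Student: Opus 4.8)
The plan is to fix party $A$'s opponent strategy $\mathbf{z}_B$ together with the \emph{direction} (angle $\theta_A$) of $\mathbf{z}_A$, and to show that under consensus-reachability the payoff $R_A$ is non-decreasing in the magnitude $r_A=\|\mathbf{z}_A\|$ over $[0,1]$, so that any maximizer must sit on the boundary $r_A=1$. Writing $U:=\mathbf{z}_A^{\top}Q_A=r_A\|Q_A\|\cos\theta_A$ and the constant $W:=\mathbf{z}_B^{\top}Q_A$ (which does not depend on $r_A$), I would rewrite the payoff as $R_A=W+p_A(U-W)$ and use the polar-coordinate form of $p_A$ from the excerpt, in which $p_A$ is affine in $r_A$ with slope $\tfrac18\|Q\|\cos(\rho_A-\theta_A)$. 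Differentiating then yields the compact identity
\[
\frac{\partial R_A}{\partial r_A}
= \|Q_A\|\cos\theta_A\cdot p_A
+ \tfrac18\|Q\|\cos(\rho_A-\theta_A)\,(U-W).
\]

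The second step is to read off the sign of each factor. Because of the $\tfrac18$ normalization, the winning probability satisfies $p_A\in[0,1]$, hence $p_A\ge0$; and by the egoistic property (Definition~\ref{defn:egoism}) we have $U-W=\mathbf{z}_A^{\top}Q_A-\mathbf{z}_B^{\top}Q_A\ge0$. Moreover this inequality is preserved as $r_A$ grows, since $\cos\theta_A\ge0$ makes $U$ increasing in $r_A$ while $W$ is fixed. Consequently, once I establish $\cos\theta_A\ge0$ and $\cos(\rho_A-\theta_A)\ge0$, both summands are nonnegative, the derivative is nonnegative throughout $[0,1]$, and $r_A=1$ is optimal. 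The identical computation applied to $R_B$ (with $\rho_B$ in place of $\rho_A$) gives $r_B=1$, completing the lemma by symmetry.

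The main obstacle is justifying the two cosine sign conditions, i.e. that a payoff-maximizing direction satisfies $\theta_A\in[0,\rho_A]$, an arc that lies inside $[0,\pi/2]$ precisely because consensus-reachability forces $\rho_A\le\pi/2$ (from $Q_A^{\top}Q\ge0$ together with Fact~\ref{fact:inner_product}). I would handle this by a domination argument on the angle: decreasing $\theta_A$ toward $0$ raises the own-supporter utility $U$ (the factor $\cos\theta_A$ grows), whereas increasing $\theta_A$ toward $\rho_A$ raises $p_A$ (the factor $\cos(\rho_A-\theta_A)$ grows); since $R_A=W+p_A(U-W)$ is nondecreasing in both $U$ and $p_A$ whenever $U\ge W$, any direction with $\theta_A<0$ or $\theta_A>\rho_A$ is weakly dominated by its projection onto the arc $[0,\rho_A]$. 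Restricting to that arc delivers $\cos\theta_A\ge0$ and $\cos(\rho_A-\theta_A)\ge0$ simultaneously and closes the argument. I expect the only delicate part to be the bookkeeping of boundary and degenerate cases, such as $U=W$, the limit $\rho_A=\pi/2$, or $Q_A$ collinear with $Q$, where the inequalities become equalities but the monotonicity conclusion still holds.
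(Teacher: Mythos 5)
Your proof is correct, but it takes a genuinely different route from the paper's. The paper argues \emph{second-order}: it computes $\partial^2 R_A/\partial r_A^2 \;\propto\; \|Q_A\|\|Q\|\cos\theta_A\cos(\rho_A-\theta_A)\ge 0$ under consensus-reachability, concludes that $R_A$ is convex in $r_A$ so any maximizer lies at the boundary $r_A\in\{0,1\}$, and then rules out $r_A=0$ by an explicit endpoint comparison $R_A((1,r_B),\bm{\theta})-R_A((0,r_B),\bm{\theta})\ge 0$, which needs the Cauchy--Schwarz bound $\mathbf{z}_B^{\top}Q\le 2$ together with the egoistic property. You argue \emph{first-order}: the identity $\partial R_A/\partial r_A = p_A\|Q_A\|\cos\theta_A + \tfrac18\|Q\|\cos(\rho_A-\theta_A)\,(U-W)$ (which is the correct slope per Eq.~(\ref{eq:prob})) has both summands nonnegative once $p_A\ge 0$, egoism $U\ge W$, and the two cosine signs are in place, so $R_A$ is nondecreasing in $r_A$ and $r_A=1$ is optimal --- no endpoint comparison or Cauchy--Schwarz is needed. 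Both proofs rest on the same structural inputs (consensus-reachability giving $\rho_A\le\pi/2$ and hence the cosine signs on the arc $\theta_A\in[0,\rho_A]$, plus egoism), and your reflection/domination argument for confining $\theta_A$ to that arc is essentially the argument the paper states separately right after the lemma (Figure~\ref{fig:policies_in_alpha}). Your route is leaner and proves slightly more (monotonicity of payoff in intensity, not merely boundary optimality); the paper's convexity observation is the tool that still works where the first derivative's sign is ambiguous, and it is the device reused in the non-consensus-reachable analysis (Lemma~\ref{lem:non-consensus-reachable-norm-maximum}).

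One imprecision to fix: your claim that the derivative is nonnegative \emph{throughout} $[0,1]$ is too strong. When $W=\mathbf{z}_B^{\top}Q_A>0$ and $r_A$ is small, we have $U<W$, the second summand is negative, and the sign of $\partial R_A/\partial r_A$ can flip. This does not break the proof: as you yourself observe, egoism is preserved as $r_A$ increases, so from any egoistic profile (the paper's standing WLOG restriction) the payoff is nondecreasing all the way up to $r_A=1$, which is all the lemma requires; the non-egoistic sub-interval is excluded by the WLOG reduction (or, in the paper's version, handled implicitly by the $r_A=0$ versus $r_A=1$ comparison). Stating the monotonicity claim only on the egoistic region $\{r_A: U\ge W\}$ would make your argument airtight.
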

%\begin{proof}
%See Appendix~\ref{appendix:proof_lemma_1} for the proof.
%\end{proof}
%\iffalse
\begin{proof} (sketch) 
The second derivative of the payoff function $R_A(\mathbf{r}, \bm{\theta})$ w.r.t. $r_A$ is 
%R_A(\mathbf{r}, \bm{\theta}) = \frac{1}{2}(r_A\|Q_A\|\cos(\theta_A)+\mathbf{z}_B^{\top}Q_A) + \frac{1}{4kn}(r_A\|Q\|\cos(\rho_A-\theta_A)-\mathbf{z}_B^{\top}Q)(r_A\|Q_A\|\cos(\theta_A)-\mathbf{z}_B^{\top}Q_A)
%\]
%\begin{eqnarray*}
$\pdv[order=2]{R_A(\mathbf{r}, \bm{\theta})}{r_A} = \tfrac12\|Q_A\|\|Q\| \cos(\theta_A)\cos(\rho_A-\theta_A)$,  
%\end{eqnarray*} 
which is nonnegative by the consensus-reachable property. Thus, $R_A(\mathbf{r}, \bm{\theta})$ is convex w.r.t.~$r_A$, and hence the maximizer $r_A$ is at the boundary, which is either at~1 or~0. Next, we can derive  $R_A((1, r_B), \bm{\theta}) - R_A((0, r_B), \bm{\theta}) = \frac{1}{2}\|Q_A\|\cos(\theta_A) + \tfrac14\Bigl(\|Q\|\|Q_A\|\cos(\rho_A-\theta_A)\cos(\theta_A)\Bigr) -\tfrac14\Bigl(\mathbf{z}_B^{\top}Q\|Q_A\|\cos(\theta_A) +\mathbf{z}_B^{\top}Q_A\|Q\|\cos(\rho_A-\theta_A)\Bigr)\geq 0$ using the Cauchy-Schwarz inequality and the egoistic property. Thus, choosing $r_A = 1$ maximizes party~$A$'s payoff. It similarly holds for party~$B$. 
Refer to Appendix~B %\ref{appendix:proof_lemma_1} 
for the full proof.
%\qed
\end{proof}
%\fi

\paragraph{Remark.} Intuitively, since the game is consensus-reachable, maximizing~$r_A$ increases the winning probability of party~$A$ and the utility $u_A(\mathbf{z}_A)$ as well. Lemma~\ref{lem:consensus-norm} serves as a thorough validation of this intuition.

Next, we argue that %focus on the angle $\theta_A$ and $\theta_B$. We make a reasonable assumption that 
%the oriented angle from $Q_A$ to~$\mathbf{z}_A$ (i.e., $\theta_A$) is between 0 and the oriented angle from $Q_A$ to~$Q$ (i.e., $\rho_A$). To simplify our discussions, 
considering $\theta_A\in [0, \rho_A]$ and $\theta_B\in [0, \rho_B]$ is sufficient. %(see Figure~\ref{fig:policies_in_alpha}). 
Indeed, whenever there is a strategy $\mathbf{z}_A'$ with the angle (i.e., $\theta_A'-\rho_A$) between $Q$ and $\mathbf{z}_A'$, one can find a $\mathbf{z}_A$ in $[0, \rho_A]$ with the same angle $\rho_A-\theta_A = \theta_A'-\rho_A$. Hence, the winning probability remains the same while the utility $u_A(\mathbf{z}_A)$ is larger. Similarly, a strategy $\mathbf{z}_A''$ with $\theta_A'' = -\theta_A$ can be replaced by $\mathbf{z}_A$ since the winning probability increases (see Figure~\ref{fig:policies_in_alpha}). 

\begin{figure}[ht]
    \centering
    \includegraphics[width=0.99\columnwidth]{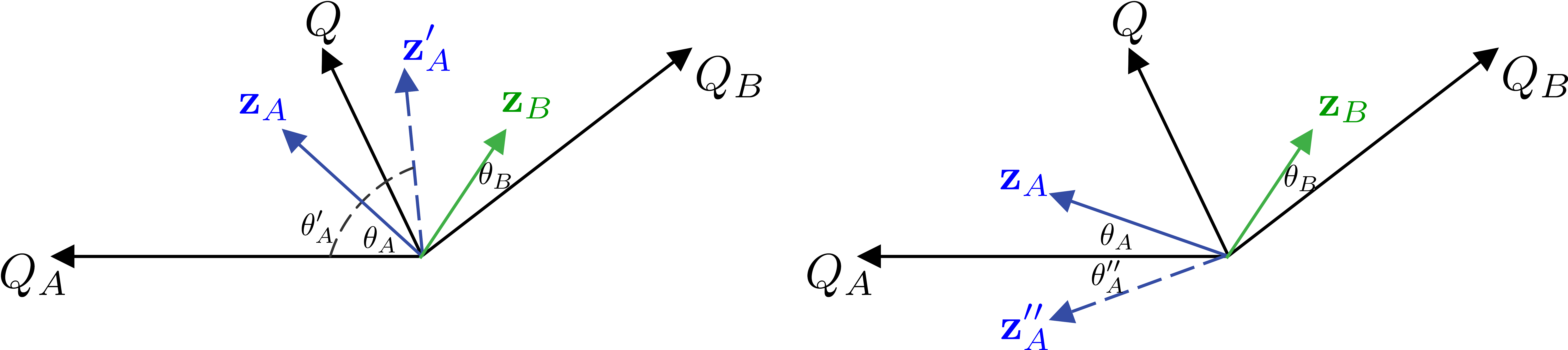}
    \caption{The reasonable range of a policy's angle.}
    \label{fig:policies_in_alpha}
\end{figure}

\paragraph{Remark.} As the game is consensus-reachable, the policy of party~$B$ can be chosen similarly. We prove that the payoff functions of party~$A$ and~$B$, provided with $r_A = r_B = 1$ by Lemma~\ref{lem:consensus-norm}, are concave with respect to~$\theta_A$ and~$\theta_B$. By Kakutani's fixed point theorem, with the convexity and compactness of the domain and continuity of the payoff functions, we have Theorem~\ref{thm:result_consensus} to conclude this subsection (see the detailed proof in Appendix~C). %\ref{appendix:proof_theorem_2}). 

\begin{thm}\label{thm:result_consensus}
The two-party policy competition game always has a PSNE if it is consensus-reachable.  
\end{thm}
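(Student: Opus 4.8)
The plan is to leverage the two reductions already in place---Lemma~\ref{lem:consensus-norm} pinning both radii to $r_A=r_B=1$, and the preceding observation that it suffices to search $\theta_A\in[0,\rho_A]$ and $\theta_B\in[0,\rho_B]$ with $\rho_A,\rho_B\in[0,\tfrac{\pi}{2}]$ under consensus-reachability---so that the game collapses to a two-player game in which each player controls a single angular variable ranging over a compact convex interval. The payoffs $R_A,R_B$ are compositions of cosines and of the affine map $p_A$, hence jointly continuous on $[0,\rho_A]\times[0,\rho_B]$. The only missing hypothesis for the Debreu--Glicksberg--Fan existence theorem (proved via the Kakutani fixed-point argument flagged in the remark) is that each player's payoff is quasi-concave---indeed concave---in its own angle.

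The crux is therefore the following claim: for each fixed $\theta_B\in[0,\rho_B]$, the map $\theta_A\mapsto R_A$ is concave on $[0,\rho_A]$, and symmetrically for $B$. Substituting $r_A=r_B=1$ into the polar expression for $R_A$, I would write $p_A=K+c\cos(\rho_A-\theta_A)$ with $c=\tfrac18\|Q\|$ and $K=\tfrac12-c\cos(\rho_B-\theta_B)$, and abbreviate the opponent's contribution by $M=\mathbf{z}_B^{\top}Q_A=\|Q_A\|\cos(\rho_A+\rho_B-\theta_B)$; both $K$ and $M$ are constant in $\theta_A$. Differentiating twice and applying the identity $\cos(\rho_A-\theta_A)\cos\theta_A=\tfrac12\bigl(\cos\rho_A+\cos(\rho_A-2\theta_A)\bigr)$ gives
\[
\odv[order={2}]{R_A}{\theta_A}=-\|Q_A\|K\cos\theta_A-\tfrac14\|Q\|\|Q_A\|\cos(\rho_A-2\theta_A)+\tfrac18\|Q\|M\cos(\rho_A-\theta_A).
\]

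The first two summands are manifestly nonpositive on the admissible range: $\cos\theta_A\ge0$ and $K>0$ (since $c\le\tfrac14$ forces $K\ge\tfrac14$), while $\rho_A-2\theta_A\in[-\rho_A,\rho_A]\subseteq[-\tfrac{\pi}{2},\tfrac{\pi}{2}]$ makes $\cos(\rho_A-2\theta_A)\ge0$. The delicate summand is the third, since $M$ may be positive. Here the egoistic property is the decisive tool: it gives $M=\mathbf{z}_B^{\top}Q_A\le\mathbf{z}_A^{\top}Q_A=\|Q_A\|\cos\theta_A$, and because $\cos(\rho_A-\theta_A)\ge0$ on $[0,\rho_A]$ this bounds the third summand above by $\tfrac18\|Q\|\|Q_A\|\cos\theta_A\cos(\rho_A-\theta_A)=\tfrac{1}{16}\|Q\|\|Q_A\|\bigl(\cos\rho_A+\cos(\rho_A-2\theta_A)\bigr)$. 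Combining this with the (dominant) second summand and using $\cos(\rho_A-2\theta_A)\ge\cos\rho_A\ge0$ on the interval collapses the last two contributions to at most $-\tfrac18\|Q\|\|Q_A\|\cos\rho_A\le0$, so the whole second derivative is nonpositive and concavity follows. I expect this cross-term estimate to be the one genuinely nontrivial point; the na\"{i}ve bound $M\le\|Q_A\|$ is too lossy near the corner $\theta_A=\rho_A$ as $\rho_A\to\tfrac{\pi}{2}$, and only the egoistic inequality (which must hold at any equilibrium anyway) supplies the extra factor $\cos\theta_A$ needed to let the negative terms absorb it.

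With own-concavity established and the strategy sets being nonempty compact convex intervals, the best-response correspondences $\theta_A\mapsto\argmax_{\theta_A\in[0,\rho_A]}R_A$ and $\theta_B\mapsto\argmax_{\theta_B\in[0,\rho_B]}R_B$ are nonempty- and convex-valued and upper hemicontinuous on the compact convex product $[0,\rho_A]\times[0,\rho_B]$. Kakutani's fixed-point theorem then yields a fixed point $(\theta_A^{*},\theta_B^{*})$, which is a PSNE of the reduced angular game; transporting it back through the rotation that sends $(r,\theta)$ to a vector in $S$ (and recalling that components orthogonal to $\mathrm{span}(\{Q_A,Q_B\})$ do not affect the payoffs) produces a PSNE of the original game, establishing Theorem~\ref{thm:result_consensus}.
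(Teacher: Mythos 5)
Your proposal is correct and follows essentially the same route as the paper: fix $r_A=r_B=1$ via Lemma~\ref{lem:consensus-norm}, restrict to the compact angular domain $[0,\rho_A]\times[0,\rho_B]$, prove own-concavity of each payoff in its own angle (using the egoistic property and consensus-reachability to sign the second derivative), and conclude by Kakutani's fixed-point theorem. The only difference is bookkeeping: the paper groups the second derivative as $p''(\theta_A)\,(\mathbf{z}_A^{\top}Q_A-\mathbf{z}_B^{\top}Q_A)-2p'(\theta_A)\|Q_A\|\sin\theta_A-p(\theta_A)\|Q_A\|\cos\theta_A$, in which each summand is already manifestly nonpositive, so your product-to-sum estimate absorbing the cross term $\tfrac18\|Q\|M\cos(\rho_A-\theta_A)$ is correct but becomes unnecessary under that grouping.
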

\subsection{The Non-Consensus-Reachable Case}
\label{subsec:non-consensus-reachable}
%=============================================================

Without the consensus-reachable condition, 
%(e.g., see Figure~\ref{fig:unresolved}), 
it is unclear whether previous analysis still holds. For example, when $\rho_A-\theta_A > \pi/2$, increasing $r_A$ harms the winning probability $p_A$ %of party~$A$ 
because $\mathbf{z}_A^{\top} Q$ is negative. A party's policy which is ``good'' for $Q_A$ %their supporters 
is `'NOT necessarily good'' for $Q$. % all the voters. 

In this section, we consider the case that $\rho_A > \pi/2$ and $\rho_B \leq \pi/2$. The other case that $\rho_B > \pi/2$ and $\rho_A \leq \pi/2$ can be handled similarly. Since the angle between $Q_A$ and $Q_B$ never exceeds $\pi$, at most one of $\rho_A$ and $\rho_B$ is greater than $\pi/2$. 
%Moreover, we make another reasonable assumption that $\theta_A\in [0, \pi/2]$ so that $u_A(\mathbf{z}_A)=\mathbf{z}_A^{\top}Q_A \geq 0$, which means that the strategy $\mathbf{z}_A$ will never bring negative utility to its supporters. 

%================================================================
\subsubsection{$\bm{r_A}$ and $\bm{r_B}$.}
%================================================================

Firstly, we can prove that $\pdv{R_A(\mathbf{r}, \bm{\theta})}{r_A}$ is always nonnegative for $\theta_A\in [0,\pi/2]$, so for $\theta_A$ is this range choosing $r_A = 1$ is the maximizer. For $\theta_A\in [\pi/2, \rho_A]$, we can also derive that $\pdv{R_A(\mathbf{r}, \bm{\theta})}{r_A}\big|_{\theta_A=\pi/2}\geq 0$ and $\pdv[order=2]{R_A(\mathbf{r}, \bm{\theta})}{r_A} \leq 0$ for $\theta_A\geq \pi/2$. So, either $\pdv{R_A(\mathbf{r}, \bm{\theta})}{r_A}\geq 0$ for all $\theta_A\in [\pi/2, \rho_A]$, or there exists $\theta_A'\in [\pi/2,  \rho_A]$ such that 
$\pdv{R_A(\mathbf{r}, \bm{\theta})}{r_A}< 0$ for $\theta_A\in (\theta_A', \rho_A]$. The latter implies that $r_A = 0$ maximizes $R_A$, however, one can replace the policy of $r_A=0$ with another better response, say, $r_A = 1$ and $\theta_A = \pi/2$. Hence, choosing~$r_A~=~1$ always maximizes party~$A$'s payoff. For party~$B$, similar to the analysis for the consensus-reachable case, choosing $r_B = 1$ maximizes party~$B$'s payoff. Hence, we obtain Lemma~\ref{lem:non-consensus-reachable-norm-maximum} (refer to Appendix~D %\ref{subsec:appendix:proof_lem_2} 
for more detail).

\begin{lem}\label{lem:non-consensus-reachable-norm-maximum}
For the nonconsensus-reachable case where $\rho_A > \pi/2$ or $\rho_B > \pi/2$, %and $\theta_A, \theta_B\in [0, \pi/2]$, 
the best responses of the two players always set $r_A = r_B = 1$.    
\end{lem}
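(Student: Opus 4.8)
The plan is to fix party~$B$'s policy and the angle~$\theta_A$, and to analyze $R_A(\mathbf{r},\bm{\theta})$ as a function of the single scalar $r_A\in[0,1]$ over the relevant range $\theta_A\in[0,\rho_A]$. Setting $a=\|Q_A\|\cos\theta_A$ and $c=\|Q\|\cos(\rho_A-\theta_A)$, so that $\mathbf{z}_A^\top Q_A=r_A a$ and $\mathbf{z}_A^\top Q=r_A c$, and holding $b=\mathbf{z}_B^\top Q_A$ and $d=\mathbf{z}_B^\top Q$ fixed, the payoff is the quadratic $R_A=\tfrac18 ac\,r_A^2+\bigl(\tfrac12 a-\tfrac18(cb+ad)\bigr)r_A+\text{const}$, whence $\pdv[order=2]{R_A(\mathbf{r},\bm{\theta})}{r_A}=\tfrac14\|Q_A\|\|Q\|\cos\theta_A\cos(\rho_A-\theta_A)$ and the first derivative is linear in~$r_A$. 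Two geometric facts drive everything. Since $Q=Q_A+Q_B$ lies in the cone spanned by $Q_A,Q_B$, the angle between $Q_A$ and $Q_B$ equals $\rho_A+\rho_B\le\pi$; with $\rho_A>\pi/2$ this forces $\rho_B<\pi/2$. Moreover, as $\theta_B\in[0,\rho_B]$ we have $\rho_A+\rho_B-\theta_B\ge\rho_A>\pi/2$, so $b=\mathbf{z}_B^\top Q_A=r_B\|Q_A\|\cos(\rho_A+\rho_B-\theta_B)\le0$. This last sign is exactly what rescues the awkward case.

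For $\theta_A\in[0,\pi/2]$ I would show $\pdv{R_A(\mathbf{r},\bm{\theta})}{r_A}\ge0$ on $[0,1]$, giving $r_A=1$. Writing $\pdv{R_A(\mathbf{r},\bm{\theta})}{r_A}=\tfrac{c}{8}(\mathbf{z}_A^\top Q_A-\mathbf{z}_B^\top Q_A)+p_A\|Q_A\|\cos\theta_A$, the second term is a probability times $\cos\theta_A\ge0$, hence nonnegative; using the egoistic property $\mathbf{z}_A^\top Q_A\ge\mathbf{z}_B^\top Q_A$, the bound $b\le0$, and the normalizations $\|Q_A\|,\|Q_B\|\le1$ one checks the first term cannot overwhelm it (a short but bookkeeping-heavy estimate). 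For $\theta_A\in[\pi/2,\rho_A]$ the product $\cos\theta_A\cos(\rho_A-\theta_A)\le0$, so $R_A$ is concave in~$r_A$; evaluating the derivative at $\theta_A=\pi/2$, where $a=0$ and $c=\|Q\|\sin\rho_A\ge0$, gives $\pdv{R_A(\mathbf{r},\bm{\theta})}{r_A}=-\tfrac18 cb\ge0$ by $b\le0$, so $r_A=1$ is optimal there. A sign analysis as $\theta_A$ increases toward $\rho_A$ then yields a dichotomy: either the derivative stays nonnegative on $[0,1]$ (and $r_A=1$ wins), or past some threshold $\theta_A'$ the concave-in-$r_A$ payoff is decreasing on all of $[0,1]$, so its maximizer is the degenerate $r_A=0$.

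The main obstacle is precisely this degenerate branch $\theta_A\in(\theta_A',\rho_A]$, where the within-$r_A$ optimum is the zero policy. I would dispose of it by a replacement argument, comparing $r_A=0$ with the alternative $(r_A,\theta_A)=(1,\pi/2)$. Both satisfy $\mathbf{z}_A^\top Q_A=0$, so party~$A$'s own-supporter contribution is unchanged, while the alternative has $\mathbf{z}_A^\top Q=\|Q\|\sin\rho_A\ge0$, which weakly raises $p_A$. A direct computation gives the payoff gain $R_A(1,\pi/2)-R_A(0)=-\tfrac18\,b\,\|Q\|\sin\rho_A\ge0$, again because $b\le0$; hence the alternative is a weakly better response with $r_A=1$, and so in every case party~$A$ admits a best response at $r_A=1$. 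Finally, since $\rho_B\le\pi/2$, party~$B$'s subproblem is of the consensus-reachable type, so the argument of Lemma~\ref{lem:consensus-norm} applies verbatim to give $r_B=1$, completing the proof.
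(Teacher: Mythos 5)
Your proposal is correct and follows essentially the same route as the paper's proof: the same split of $\theta_A$ at $\pi/2$ (showing $\partial R_A/\partial r_A\ge 0$ on $[0,\pi/2]$), the same concavity-plus-dichotomy analysis on $[\pi/2,\rho_A]$ resolved by replacing the degenerate $r_A=0$ branch with the policy $(r_A,\theta_A)=(1,\pi/2)$, and the same reduction of party~$B$'s problem to the consensus-reachable argument via $\rho_B<\pi/2$. If anything, your explicit use of $b=\mathbf{z}_B^{\top}Q_A\le 0$ and the computed gain $-\tfrac18\,b\,\|Q\|\sin\rho_A\ge 0$ make the replacement step more transparent than the paper's phrasing, while the one step you defer (the estimate for $\theta_A<\rho_A-\pi/2$, where the coefficient $c$ is negative) is also the softest spot of the paper's own argument; it does go through, but note that it is tight and hinges on $\|Q\|<1$, a consequence of $\rho_A>\pi/2$ rather than of the stated normalizations alone.
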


Under the condition that $r_A = r_B = 1$ is fixed, we can show that the payoff functions of party~$A$ and party~$B$, in terms of functions of variables $\theta_A$ and $\theta_B$, respectively, are concave--quasi-concave (see the proof of Theorem~\ref{thm:non-consensus-reachable-PSNE}). Hence, for the non-consensus-reachable case of the game, %under the mild condition that $\mathbf{z}_A^{\top}Q_A\geq 0$ and $\mathbf{z}_B^{\top}Q_B\geq 0$, 
we can guarantee that a PSNE exists. 

\begin{thm}\label{thm:non-consensus-reachable-PSNE}
%Under the mild condition that $\mathbf{z}_A^{\top}Q_A\geq 0$ and $\mathbf{z}_B^{\top}Q_B\geq 0$, 
The two-party policy competition game has at least one PSNE even it is not consensus-reachable.
\end{thm}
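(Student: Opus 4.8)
The plan is to collapse the game to one angular variable per player and then invoke Kakutani's fixed-point theorem. First I would use Lemma~\ref{lem:non-consensus-reachable-norm-maximum} to fix $r_A=r_B=1$, so that the two payoffs become functions of the angles $\theta_A\in[0,\rho_A]$ and $\theta_B\in[0,\rho_B]$ only, with
\[
p=\tfrac12+\tfrac{\|Q\|}{8}\bigl(\cos(\rho_A-\theta_A)-\cos(\rho_B-\theta_B)\bigr),\quad R_A=p\,\|Q_A\|\cos\theta_A+(1-p)\,\|Q_A\|\cos(\rho_A+\rho_B-\theta_B).
\]
These angular domains are nonempty, compact, and convex, and on them the egoistic inequality $\mathbf{z}_A^{\top}Q_A\ge\mathbf{z}_B^{\top}Q_A$ holds automatically, since $\theta_A\le\rho_A$ and $\rho_A+\rho_B-\theta_B\ge\rho_A$ give $\cos\theta_A\ge\cos\rho_A\ge\cos(\rho_A+\rho_B-\theta_B)$. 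The payoffs are continuous, so Berge's Maximum Theorem already yields nonempty, compact-valued, upper semicontinuous best-response correspondences; the only missing ingredient for Kakutani is quasi-concavity of each payoff in its own angle, which makes the best responses convex-valued.

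Party~$B$ is the easy half: because $\rho_B\le\pi/2$ forces $\theta_B\in[0,\pi/2]$, the term-by-term second-derivative bound of Theorem~\ref{thm:result_consensus} applies verbatim (using $\pdv[order=2]{p}{\theta_B}\le0$, the egoistic inequality, and $\sin\theta_B,\cos\theta_B\ge0$), so $R_B$ is \emph{concave} in $\theta_B$. The difficulty is concentrated entirely in party~$A$, where $\rho_A>\pi/2$ permits $\theta_A>\pi/2$; then $\cos\theta_A<0$ and $\cos(\rho_A-\theta_A)$ may be negative, so the term $-p\|Q_A\|\cos\theta_A$ in the second derivative turns positive and the consensus-case global bound breaks down. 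Concavity genuinely fails here, so I would target \emph{quasi}-concavity instead.

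The crux is to test the sign of $\pdv[order=2]{R_A}{\theta_A}$ only at stationary points rather than globally. Differentiating gives $\pdv{R_A}{\theta_A}=\pdv{p}{\theta_A}(\mathbf{z}_A^{\top}Q_A-\mathbf{z}_B^{\top}Q_A)-p\|Q_A\|\sin\theta_A$ and a second derivative containing the troublesome $\pdv[order=2]{p}{\theta_A}(\mathbf{z}_A^{\top}Q_A-\mathbf{z}_B^{\top}Q_A)$ term. At an interior critical point the first-order condition supplies $\pdv{p}{\theta_A}(\mathbf{z}_A^{\top}Q_A-\mathbf{z}_B^{\top}Q_A)=p\|Q_A\|\sin\theta_A$; feeding this together with the identity $\pdv[order=2]{p}{\theta_A}=-\cot(\rho_A-\theta_A)\,\pdv{p}{\theta_A}$ and the angle-addition identity $\cos(\rho_A-\theta_A)\sin\theta_A+\sin(\rho_A-\theta_A)\cos\theta_A=\sin\rho_A$ collapses everything to
\[
\left.\pdv[order=2]{R_A}{\theta_A}\right|_{\mathrm{crit}}=-\,p\,\|Q_A\|\,\frac{\sin\rho_A}{\sin(\rho_A-\theta_A)}\;-\;2\,\pdv{p}{\theta_A}\,\|Q_A\|\sin\theta_A .
\]
For $\theta_A\in(0,\rho_A)$ with $\rho_A\in(\pi/2,\pi]$ one has $p\ge0$, $\sin\rho_A\ge0$, $\sin(\rho_A-\theta_A)>0$, $\sin\theta_A\ge0$, and $\pdv{p}{\theta_A}=\tfrac{\|Q\|}{8}\sin(\rho_A-\theta_A)\ge0$, so both terms are nonpositive. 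Thus every interior stationary point is a local maximum, $\pdv{R_A}{\theta_A}$ crosses zero at most once, and $R_A$ is single-peaked, i.e.\ quasi-concave, on $[0,\rho_A]$.

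Having established that $R_A$ is quasi-concave in $\theta_A$ and $R_B$ concave in $\theta_B$ on the compact convex domains $[0,\rho_A]\times[0,\rho_B]$, the best-response correspondence is nonempty, convex-valued, and upper semicontinuous, and Kakutani's fixed-point theorem produces a PSNE. I expect the main obstacle to be exactly the quasi-concavity of $R_A$: the naive sign analysis fails once $\theta_A$ exceeds $\pi/2$, and the argument lives or dies on the critical-point substitution that eliminates the offending $(\mathbf{z}_A^{\top}Q_A-\mathbf{z}_B^{\top}Q_A)$ factor and the trigonometric collapse to $\sin\rho_A/\sin(\rho_A-\theta_A)$. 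The degenerate boundary angles $\theta_A\in\{0,\rho_A\}$, where $\sin(\rho_A-\theta_A)$ may vanish, would need a short separate check, but they can only contribute endpoint maxima and do not disturb single-peakedness.
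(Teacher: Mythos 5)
Your proposal is correct and shares the paper's overall skeleton---fix $r_A=r_B=1$ via Lemma~\ref{lem:non-consensus-reachable-norm-maximum}, establish quasi-concavity of each payoff in its own one-dimensional variable, then apply Berge's theorem and Kakutani's fixed-point theorem---but the key technical step is carried out by a genuinely different route. The paper changes variables to $x=\cos\theta_A$, $y=\cos\theta_B$ and proves that the \emph{hard} player~$A$ (with $\rho_A>\pi/2$) has a payoff that is fully \emph{concave} in $x$ on $[\cos\rho_A,1]$, via the explicit computation $f''(x)=D_1 D_0\bigl[2\phi'(x)+(x-C_4)\phi''(x)\bigr]\le 0$ with $\phi(x)=C_1x+\sqrt{1-C_1^2}\sqrt{1-x^2}-C_3$, while party~$B$'s payoff is shown only to be unimodal in $y$. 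You do the mirror image in angle coordinates: party~$B$ (the player with $\rho_B\le\pi/2$) inherits genuine concavity in $\theta_B$ verbatim from the consensus-case computation, and the hard player~$A$ gets only quasi-concavity, via your first-order-condition substitution and the trigonometric collapse to
\[
\left.\frac{\partial^2 R_A}{\partial\theta_A^2}\right|_{\mathrm{crit}}
=-\,p\,\|Q_A\|\,\frac{\sin\rho_A}{\sin(\rho_A-\theta_A)}
-2\,\frac{\partial p}{\partial\theta_A}\,\|Q_A\|\sin\theta_A,
\]
which I verified is algebraically correct. The swap is not cosmetic: concavity is not preserved under the nonlinear reparametrization $x=\cos\theta_A$, so the paper's concavity-in-$x$ and your quasi-concavity-in-$\theta_A$ are different statements that both deliver convex-valued best responses for Kakutani. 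The paper's change of variables buys a global, sign-by-sign argument with no pointwise analysis at critical points; your route avoids the reparametrization, reuses the consensus machinery for one player, and isolates the whole difficulty in a clean critical-point test. A nice bonus of your version is the observation that the egoistic inequality holds automatically on the wedge domains, which the paper invokes as an assumption.

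One small rigor gap: ``both terms are nonpositive'' at a critical point only gives $\partial^2 R_A/\partial\theta_A^2\le 0$ there, which does not by itself make every critical point a local maximum (consider $t\mapsto t^4$ at $t=0$, which satisfies the same weak inequality at its unique critical point yet is a minimum). You need strictness: for interior $\theta_A\in(0,\rho_A)$ your second term equals $-\tfrac{\|Q\|}{4}\|Q_A\|\sin(\rho_A-\theta_A)\sin\theta_A$, which is strictly negative unless $\|Q\|\,\|Q_A\|=0$; in those degenerate cases $R_A$ is constant or monotone in $\theta_A$, hence quasi-concave anyway. With that one sentence added, and the boundary check you already flag, the argument is complete.
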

\begin{proof}
(sketch) 
%Refer to Appendix~\ref{subsec:appendix:proof_theorem_3} for the detailed proof.
%\qed
Since $r_A=r_B = 1$ is the condition for the best responses of the two players, it suffices to investigate the strategies in terms of the ``angles''. To further simplify our discussion, we replace $\cos(\theta_A)$ and $\cos(\theta_B)$ by~$x$ and $y$ respectively, and let $f(x):= R_A(r_A=1, \theta_A)$ and $g(y):= R_B(r_B=1, \theta_B)$. Since $\theta_A, \theta_B\in [0,\pi/2]$, it is clear that $x, y\in [0, 1]$. Then, we rewrite $f(x)$ and $g(y)$ as  
\begin{align*}
& f(x) = 
\bigl(\tfrac12 + D_0\bigl(C_1\,x + \sqrt{1 - C_1^2}\,\sqrt{1 - x^2} - C_3\bigr)\bigr)\,D_1\,x \\
& \;+\;
\bigl(\tfrac12 - D_0\bigl(C_1\,x + \sqrt{1 - C_1^2}\,\sqrt{1 - x^2} - C_3\bigr)\bigr)\,D_1\,C_4,\\
& g(y) = 
\bigl(\tfrac12 + D_0\bigl(C_2\,y + \sqrt{1 - C_2^2}\,\sqrt{1 - y^2} - C_3'\bigr)\bigr)\,D_2\,y \\
& \;+\;
\bigl(\tfrac12 - D_0\bigl(C_2\,y + \sqrt{1 - C_2^2}\,\sqrt{1 - y^2} - C_3'\bigr)\bigr)\,D_2\,C_4',
\end{align*}
where %$r_A = r_B = 1$, 
$D_0:= \|Q\|/8$, $D_1: = \|Q_A\|$, $D_2 := \|Q_B\|$, $C_1 := \cos\rho_A$, $C_2:= \cos\rho_B$, $C_3 := \cos(\rho_B-\theta_B)$, $C_3' := \cos(\rho_A-\theta_A)$, $C_4:= \cos(\rho_A+\rho_B-\theta_B)$, and $C_4' := \cos(\rho_A+\rho_B-\theta_A)$. The theorem can be proved by arguing $f(x)$ is concave and $g(y)$ is quasi-concave in the respective domains (refer to Appendix~E %\ref{subsec:appendix:proof_theorem_3} 
for the details). % of~$x$ and~$y$. 
\end{proof}

\paragraph{Remark.} From Theorem~\ref{thm:degenerate_NE}, Lemma~\ref{lem:consensus-norm} and~\ref{lem:non-consensus-reachable-norm-maximum}, we observe that 
at any PSNE of the two-party policy competition game, both parties choose policies with \emph{maximal strength}. Equilibria arise via \emph{directional} (angular) adjustments toward \(Q\), not via \emph{intensity} compromise. Relative to median-voter models, this predicts \emph{max-intensity} platforms with \emph{directional moderation}, capturing one facet of extreme politics.

%%%%%%%%%%%%%%%%%%%%%%%%%%%%%%%%%%%%%%%%%%%%%%%%%%%%%%%%%%%%%%%%%%%
\section{Bad and Good News for the Gradient-Based Approach}
\label{sec:nonmonotone_nonLipschitz}
%%%%%%%%%%%%%%%%%%%%%%%%%%%%%%%%%%%%%%%%%%%%%%%%%%%%%%%%%%%%%%%%%%%

First, we reiterate the definition of pseudo-gradient map of an $n$-player game defined by Rosen~\cite{Rosen1965} so that the discussions will be self-contained.

\paragraph{Remark} Throughout this section we use the pseudo-gradient of payoffs, $F$ is the concatenated pseudo-gradients of players' \emph{payoffs}, not their costs. Consequently, the associated VI is written with ``$\leq$''. One can equivalently work with $-F$ and the standard ``$\geq$'' VI convention.   

\begin{defn}[Pseudo‐Gradient~\cite{Rosen1965}]
\label{defn:pseudo-gradient}
The \emph{pseudo‐gradient map} \(F : S \to \prod_{i=1}^n \mathbb{R}^{d_i}\) is the concatenation of each player’s own payoff‐gradient with respect to its strategy 
%\iffalse
$F(\mathbf{x}) = (\nabla_{x_i} u_i(x_i, \mathbf{x}_{-i}))_{i=1}^n$, 
%\nabla_{x_1} u_1(x_1, \mathbf{x}_{-1}), 
%\nabla_{x_2} u_2(x_2, \mathbf{x}_{-2}), 
%\ldots, \nabla_{x_n} u_n(x_n, \mathbf{x}_{-n}))$, 
%\fi
\iffalse
\[
F(\mathbf{x})
\;=\;
\begin{pmatrix}
\nabla_{x_1} \,u_1(x_1, \mathbf{x}_{-1}) \\[0.5ex]
\nabla_{x_2} \,u_2(x_2, \mathbf{x}_{-2}) \\[0.5ex]
\vdots \\[0.5ex]
\nabla_{x_n} \,u_n(x_n, \mathbf{x}_{-n})
\end{pmatrix}, 
\]
\fi
where $u_i$ is the differentiable payoff function of player~$i$ and $\mathbf{x} = (x_1, x_2, \ldots, x_n)\in S$ is the strategy profile.
\end{defn}
Note that a profile $\mathbf{x}^*$ is a PSNE % of the game 
if and only if the \emph{variational inequality} $F(\mathbf{x}^*)^{\top}(\mathbf{y}-\mathbf{x}^*)\leq 0$ for any $\mathbf{y}\in S$. %By the pseudo-gradients, we can define the monotonicity and cocoercivity of the game as follows. 
\iffalse
\begin{defn}[Monotonicity]
\label{defn:monotone}
A pseudo-gradient mapping of the game  
$F:\mathcal D\subseteq\mathbb R^2\;\rightarrow\;\mathbb R^2$
is \emph{monotone} if for all \(\mathbf{u}=(x_1,y_1)\) and \(\mathbf{v}=(x_2,y_2)\) in the domain, we have $(F(\mathbf{u})-F(\mathbf{v}))^{\top}(\mathbf{u}-\mathbf{v})
\;\geq\; 0$. 
\end{defn}

\begin{defn}[cocoercivity]
\label{defn:cocoercive}
A pseudo-gradient mapping of the game $F:\mathcal D\subseteq\mathbb R^2\;\rightarrow\;\mathbb R^2$
is \emph{\(\lambda\)–cocoercive} (for some \(\lambda>0\)) if for all \(\mathbf{u}=(x_1,y_1)\) and \(\mathbf{v}=(x_2,y_2)\) in the domain one has
$(F(\mathbf{u})-F(\mathbf{v}))^{\top}(\mathbf{u}-\mathbf{v})
\;\geq\;
\lambda\,\bigl\|F(\mathbf{u})-F(\mathbf{v})\bigr\|^2$.
\end{defn}
\fi

\paragraph{Remark.} Since cocoercivity implies monotonicity \(\langle F(u)-F(v),u-v\rangle\leq 0\),
failure of monotonicity on \([0,1]^2\) immediately disproves any \(\lambda\)-cocoercivity.

We revisit the general case that policies are in $[-1, 1]^k\subset \mathbb{R}^k$ and can be transformed into the polar coordinates with respect to $Q_A$ and $Q_B$. 
As discussed in Sect.~\ref{sec:PSNE_general_case}, we simply assume $r_A = r_B = 1$ and 
denote $r_A\cos(\theta_A)$ and $r_B\cos(\theta_B)$ by $x$ and $y$, respectively. %, to facilitate the %following 
%analysis. 

%=============================================================
\subsection{A Counterexample of Monotonicity}
%=============================================================

Define that $F(x,y) \;=\;(F_{1}(x,y),\,F_{2}(x,y))
\;=\;\bigl(\nabla_x f_x(x,y),\;\nabla_y g_y(x,y)\bigr)$,
in which $x:=\cos(\theta_A), y:=\cos(\theta_B)$, $f_x:= R_A(\theta_A)$ and $f_y:= R_B(\theta_B)$ for $r_A = r_B = 1$ as previously defined. 
Specifically for further discussions, recall that $p_{A} =\frac12+\frac{\|Q\|}{8}(C_{1}x +\sqrt{1-C_{1}^{2}}\;\sqrt{1-x^{2}} - C_{2}y -\sqrt{1-C_{2}^{2}}\;\sqrt{1-y^{2}})$, and we have 
\begin{align*}
& f_{x}(x,y) =\|Q_{A}\|\Bigl(p_{A} x + (1-p_{A})(Ky+L\sqrt{1-y^2})\Bigr), \\
& g_{y}(x,y) =\|Q_{B}\|\Bigl((1-p_{A})\,y + p_{A}(Kx+L\sqrt{1-x^2})\Bigr),
\end{align*}
where $C_1= \cos(\rho_A), C_2= \cos(\rho_B)$, $K = C_{1}C_{2}-\sqrt{1-C_{1}^{2}} \sqrt{1-C_{2}^{2}}$, and $L = \sqrt{1-C_{1}^{2}} C_{2} + C_{1}\,\sqrt{1-C_{2}^{2}}$.
%The egoistic constraint implies that $x \geq Ky + L \sqrt{1-y^{2}}$ and $y \geq K x + L\sqrt{1-x^{2}}$.
%i.e.\ \((x,y)\) lies in the region \(\{(x,y):\cos^{-1}x+\cos^{-1}y\le\rho_A+\rho_B\}\).
\iffalse
The only singular structural pieces in the Jacobian of \(F\) come from
\[
\frac{\partial p_{A}}{\partial x}
 = 
\frac{\|Q\|}{8}\Bigl(C_{1}
-\sqrt{1-C_{1}^{2}}\frac{x}{\sqrt{1-x^{2}}}\Bigr), 
\frac{\partial p_{A}}{\partial y}
 = 
\frac{\|Q\|}{8}\Bigl(-C_{2}
+\sqrt{1-C_{2}^{2}}\frac{y}{\sqrt{1-y^{2}}}\Bigr).
\]
\fi
The pseudo‐gradient components can be derived as 
\begin{equation*}\label{eq:pseudogradient}
\begin{aligned}
%F_{1}(x,y) = 
\nabla_{x}f_{x}(x,y)
&= \|Q_{A}\|\Bigl(\,p_{A}
+ \bigl(x - M(y)\bigr)\,\frac{\partial p_{A}}{\partial x}\Bigr),\\[1ex]
%F_{2}(x,y) = 
\nabla_{y}g_{y}(x,y)
&= \|Q_{B}\|\Bigl((1 - p_{A})
+ \bigl(M(x) - y\bigr)\,\frac{\partial p_{A}}{\partial y}\Bigr),
\end{aligned}
\end{equation*}
where $M(t) = K\,t + L\sqrt{1 - t^{2}}$.
%Here we have $\frac{\partial M(y)}{\partial x} = 0$ and $\frac{\partial M(x)}{\partial y} = 0$ since \(M(y)\) depends only 
%on~\(y\) and \(M(x)\) only on~\(x\). Also, by the egoistic constraints, we have $x - M(y)\geq 0$ and $M(x) - y \leq 0$.
%
%\paragraph*{Key Observation and Counterexample Construction.}
%\(F_{1}\) and \(F_{2}\) generally differ in form; unless \(Q_{A}=Q_{B}\) and the parameters are symmetric, \(F\) is not the gradient of a single scalar function, since $\nabla_y F_1\neq \nabla_x F_2$ in general, which potentially leads to the violation of monotonicity. Indeed, consider a counterexample as follows. 
Let $\|Q_A\| = \|Q_B\| = 1$ and $\rho_A = \rho_B$ such that $\cos(\rho_A) = \cos(\rho_B) = 0.6$. Since the angle between $Q_A$ and $Q_B$ is $2\rho_A$, the law of cosines implies that $\|Q\| = (2\|Q_A\|^2-2\|Q_A\|^2\cos(\pi-2\rho_A))^{1/2} = 1.2$.  
Then, by choosing $\mathbf{v}_1 = (x_{1},y_{1}) = (0.40,0)$ and $\mathbf{v}_2 = (x_{2},y_{2}) = (0, 0.43)\), we derive that $F(\mathbf{v}_1) \approx (0.5049, 0.4049)$ and $F(\mathbf{v}_2)\approx (0.4058, 0.5096)$, hence $F(\mathbf{v}_1)-F(\mathbf{v}_2)\approx (0.0991, -0.1047)$ and $\mathbf{v}_1 - \mathbf{v}_2 = (0.40, -0.43)$. 
We obtain that $(F(z_1) - F(z_2))^{\top}(z_1-z_2)\approx 0.0847 > 0$. 
Let $\bm{\theta_i} = (\arccos(x_i), \arccos(y_i))$ for $i=1,2$. 
Note that by $\odif(\cos\theta)/\odif(\theta) = -\sin(\theta)$, the pseudo-gradient $\tilde{F}$ with respect to~$\theta_A$ and~$\theta_B$ is $(F_1(x, y)\cdot (-\sqrt{1-x^2}), F_2(x, y)\cdot (-\sqrt{1-y^2}))$, we still have $(\tilde{F}(\bm{\theta}_1) - \tilde{F}(\bm{\theta}_2))^{\top}(\bm{\theta}_1-\bm{\theta}_2) %\approx -0.2454 
< 0$. 
Hence, this is a counterexample of monotonicity which also disproves the $\lambda$-cocoercivity of the game for any~\(\lambda>0\). 
%\paragraph{Remark} This counterexample presents a scenario that two parties have $Q_A$ and $Q_B$ quite ``symmetric'': both have the same magnitude (i.e., $\|Q_A\| = \|Q_B\|$) and the same cosine similarity with their sum $Q$ (since $\rho_A$ = $\rho_B$). Even under this perfect symmetry, the pseudo-gradient mapping fails to be monotone, and the game is therefore not cocoercive. Hence, no convergence guarantee exists for a gradient-based algorithm under standard assumptions.  

%=============================================================
\subsection{Decentralized Gradient Ascent Experiments}
%=============================================================

%We have argued that a gradient-based algorithm does not necessarily lead to a PSNE. 
%To clearly exemplify this fact and 
We have shown that the two-party policy competition game is not necessarily a convex or concave game, and the game is not guaranteed to be monotone either. Nevertheless, 
we would like to know to what extent a gradient-based algorithm can help in our game. Let us consider a vanilla  gradient ascent algorithm by which the two players update their policies $\mathbf{z}_A^{(t)}$ and $\mathbf{z}_B^{(t)}$ in parallel from $\mathbf{z}_A^{(t-1)}$ and $\mathbf{z}_B^{(t-1)}$, respectively, in iterations:  
\begin{align*}
&\mathbf{z}_X^{(t)}\gets \Pi_{\mathcal{S}}\Bigl(\mathbf{z}_X^{(t-1)} + \eta_t\frac{\partial R_X(\mathbf{z}^{(t-1)})}{\partial \mathbf{z}_X}\Bigr), \;\mbox{ for } X\in \{A, B\},
\end{align*}
where $\eta_t = 1/t^{0.75}$ is the time-decaying step size chosen according to Robbins-Monro conditions~\cite{robbins1951,mertikopoulos2024}\footnote{Robbins-Monro conditions: $\sum_{t\geq 1}\eta_t = \infty$,\, $\sum_{t\geq 1}\eta_t^2 < \infty$.}. %, and $\mathcal{S}\subseteq [-1, 1]^2$ is the feasible set with vectors of norm upper bounded by~$1$.
The projection $\Pi_{\mathcal{S}}$ is done by considering two scenarios. First, if the updated policy has norm greater than~$1$, divide it by its norm. Second, if, say $\mathbf{z}_A$, is outside the ``wedge'' spanned by~$Q_A$ and~$Q$, we perform the reflection of $\mathbf{z}_A$ across the nearer boundary $Q'\in\{Q_A, Q\}$ %by~$\mathbf{z}_A' = 2\,(\mathbf{z}_A^{\top} Q'/\|Q'\|)\,(Q'/\|Q'\|) \;-\; \mathbf{z}_A$,
which flips the signed angle of $\mathbf{z}_A$ across~$Q_A$ and maintains its norm. Such a reflection projects $\mathbf{z}_A$ onto~$S$ with either the same $u_A(\mathbf{z}_A)$ but higher $p_A$ or the same $p_A$ but larger~$u_A(\mathbf{z}_A)$. 
%Then, we conduct experimental simulations of the algorithm to demonstrate its effectiveness and weaknesses. 

\begin{figure}[ht]
    \centering
    \begin{subfigure}[b]{0.475\textwidth}
    \includegraphics[width=0.9\columnwidth]{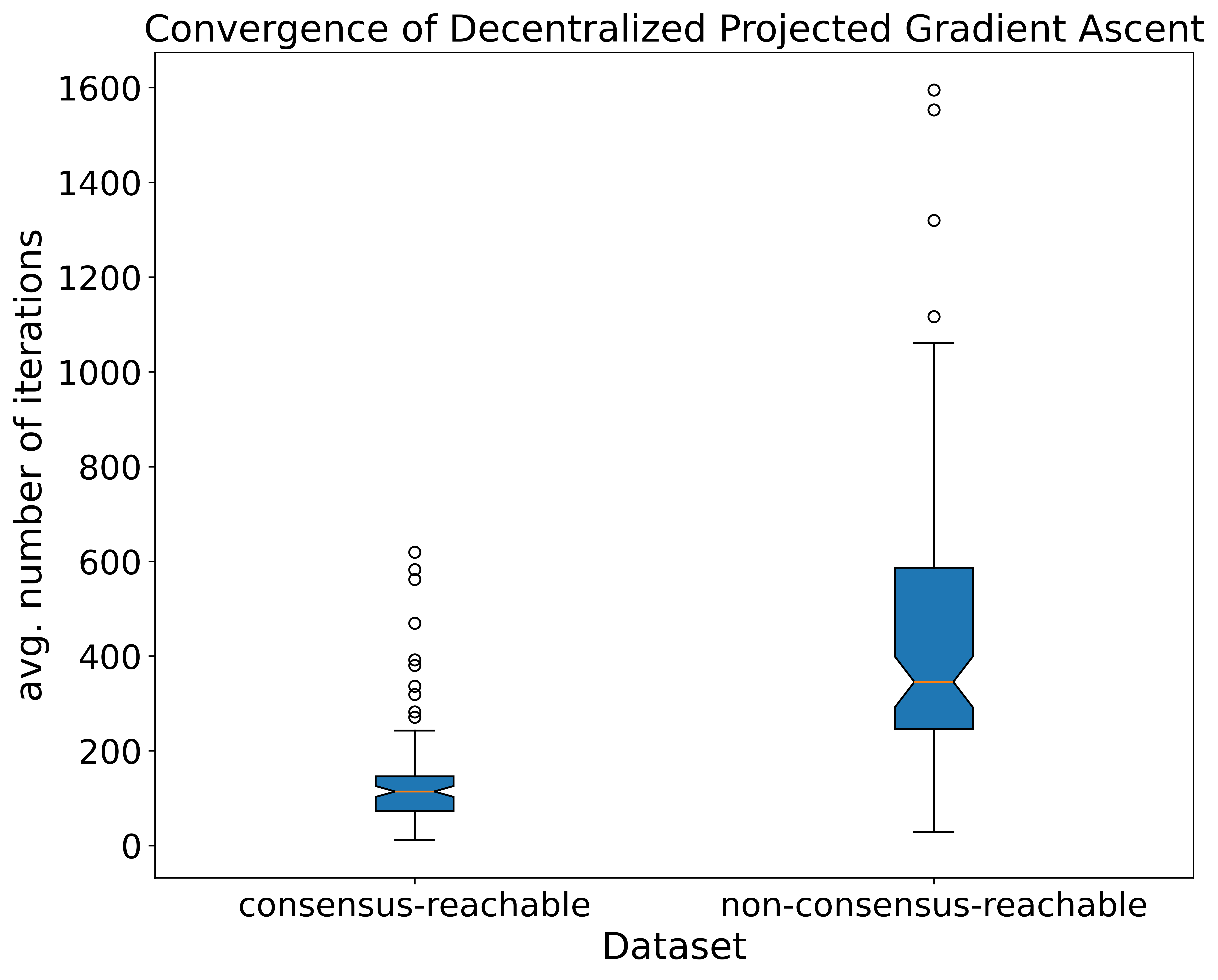}
    \caption{Average number of iterations until convergence.}
    \label{fig:convergence}
    \end{subfigure}
    \hfill
    \begin{subfigure}[b]{0.475\textwidth}
    \includegraphics[width=0.9\columnwidth]{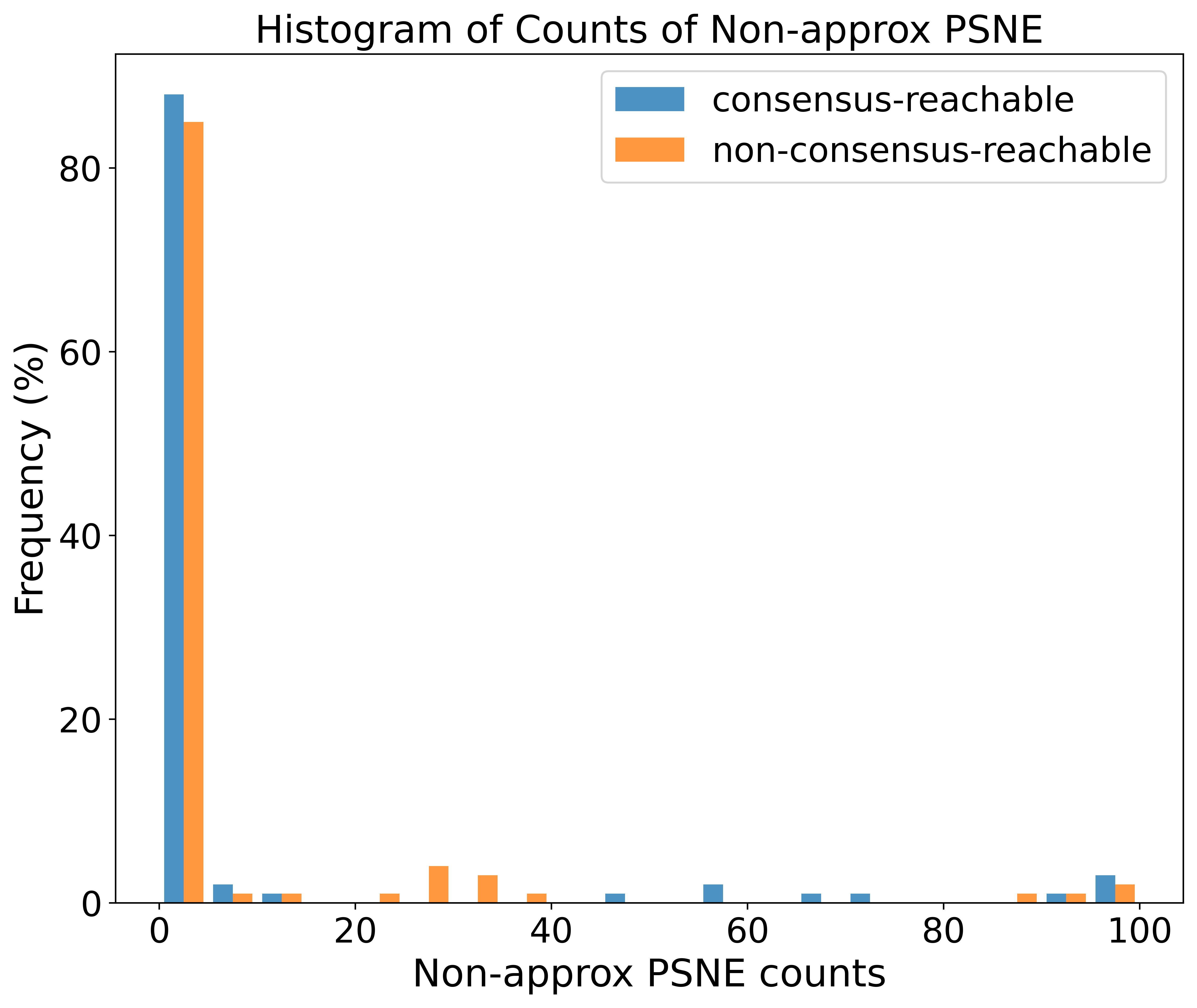}   
    \caption{Frequency of convergence that is not an approx. PSNE.}
    \label{fig:nonNE_histogram}
    \end{subfigure}
    \caption{Convergence analysis of decentralized projected gradient ascent. (a) 
    %For each instance of voters' preferences, we calculate 
    The average number of iterations until convergence for 100 random starting policy profiles for each instance. 
    (b) The frequency of the decentralized projected gradient ascent not converging to an approximate PSNE.} %The x-axis represents the number of times the decentralized projected gradient ascent does not converge to an approximate PSNE from a random initial policy profile. Each bar represents the frequency of non-approximate PSNE.} %corresponding non-approximate PSNE.}
    \label{fig:GA_experiments}
\end{figure}

We prepare 100 random instances of voters' preferences~$Q_A$ and~$Q_B$, each of them is collectively considered as a vector of norm bounded by~$1$ in~$\mathcal{S}$ and is uniformly sampled. The decentralized projected gradient ascent is run from scratch for 100 times, in each time two random initial policies $\mathbf{z}_A$ and $\mathbf{z}_B$ are uniformly sampled from the intersection of $\mathcal{S}$ and the wedge spanned by~$Q_A$ and~$Q$ and the wedge spanned by~$Q_B$ and~$Q$, respectively. The algorithm runs in parallel for each player %for at most 10,000 iterations. 
with the convergence condition which is either (i) the maximum number of iterations is reached or (ii) the policy differences satisfy $\max\{\|\mathbf{z}_A^{(t)} -\mathbf{z}_A^{(t')}\|, \|\mathbf{z}_B^{(t)} -\mathbf{z}_B^{(t')}\|\}\leq 10^{-4}$ for some $t' < t$. When the convergence %\footnote{Note that for $t'<t-1$, a cycle of deviations is found.} 
is reached, we check whether the policy profile is an (approximate) PSNE by comparing the payoff of unilateral deviation to any grid-policy in~$\mathcal{G} = \{\mathbf{s}\in G \times G: \|\mathbf{s}\|\leq 1\mbox{ and }\mathbf{s}\mbox{ is in the wedge spanned by } Q_A, Q_B\}$, for $G = \{-1+0.1k: 0\leq k\leq 20\}$. We then compute the frequency that %count the number of times a random 
an initial profile %in the instance 
leads to a convergent state under which a player can still gain more payoff by unilaterally deviating to a grid-policy. A global picture of our experiments is illustrated in Figure~\ref{fig:GA_experiments}.

The experimental results justify that a gradient-based algorithm neither always leads to an (approximate) PSNE nor converges to a unique profile (see Figure~\ref{fig:GA_experiments_multi-convergence}); nevertheless, all the runs of the projected gradient ascent converge within 4,000 iterations---most in just a few hundred. Figure~\ref{fig:convergence} shows that the algorithm usually converges faster in the consensus-reachable case than in the non-consensus-reachable case ($p$-value $<0.001$ by the Wilcoxon signed-rank test). 
%Meanwhile, 
Figure~\ref{fig:nonNE_histogram} shows that the algorithm converges to an approximate PSNE in the vast majority of cases (92.8\% and 93.4\% for a total of 10,000 runs in the consensus-reachable and non-consensus-reachable cases, respectively), though the convergence to an approximate PSNE does not significantly differ in the two cases by Fisher's exact test. The simplicity of implementation and moderately rapid convergence to PSNE profiles in most cases %of input 
make the algorithm an attractive and practical solution. 
Refer to Appendix~G %\ref{appendix:convergence_further_analysis} 
for more discussion and Appendix~I %\ref{appendix:computing_environment} 
for the computing environment setting. % of using a larger step-size~$\eta_t = 1/\sqrt{t}$.

As a robustness check, we also implemented a decentralized projected extra-gradient ascent update which has been studied for its faster convergence in equilibrium learning (see~\cite{korpelevich1976extragradient,nemirovski2004prox,daskalakis2017training,mokhtari2020unified}). However, across both datasets, the distributions of non-approximate PSNE counts and iterations-to-approximate-PSNE are visually similar to the vanilla one. We do not observe a consistent improvement (See Appendix~H).

\begin{figure}[h]
    \centering
    \begin{subfigure}[b]{0.45\textwidth}
    \includegraphics[width=0.85\textwidth]{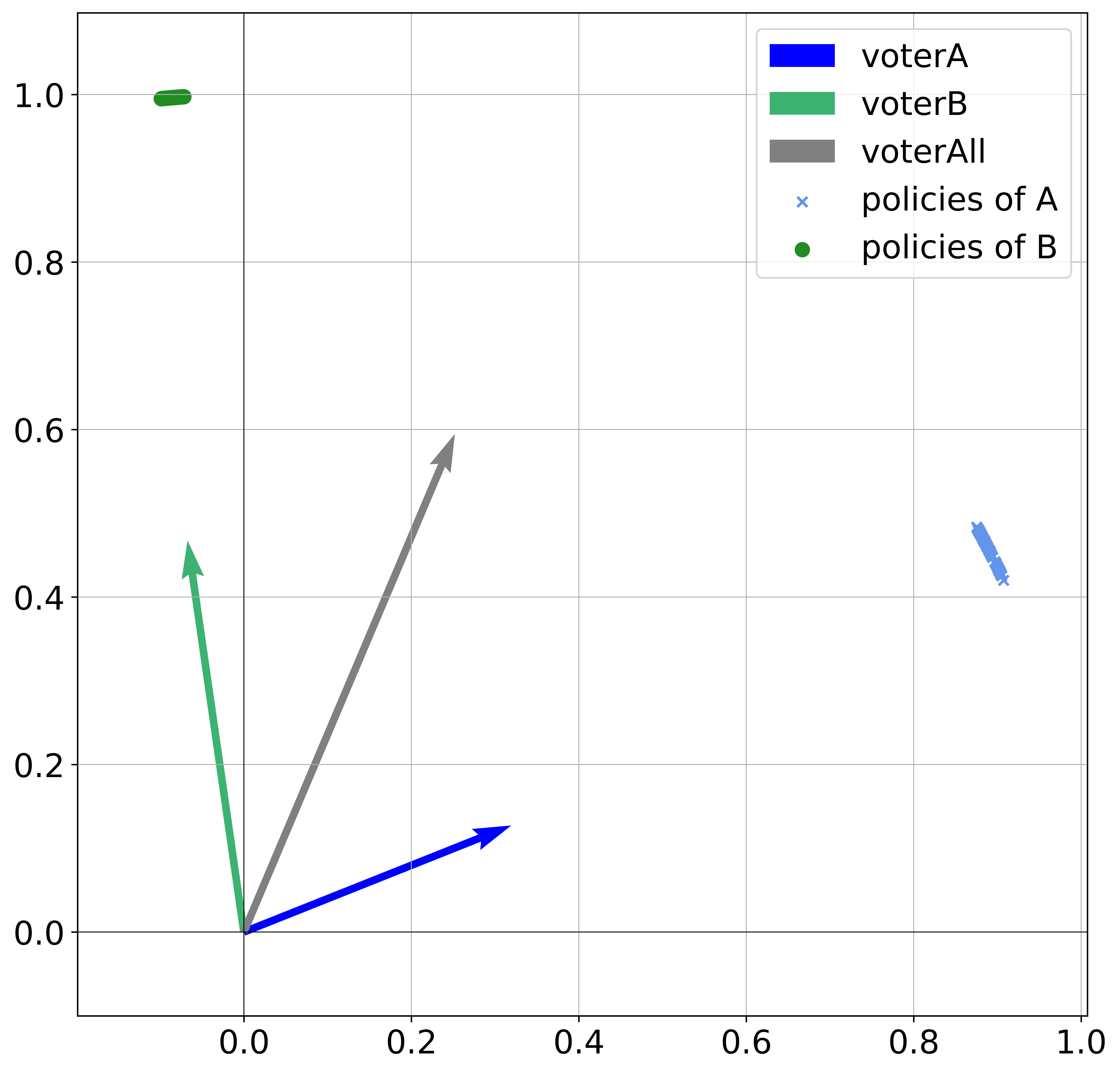}
    %\caption{A case of multiple convergence under the consensus-reachable condition.}
    %\label{fig:multi-convergence-consensus}
    \end{subfigure}
    \hfill
    \begin{subfigure}[b]{0.45\textwidth}
    \includegraphics[width=0.85\textwidth]{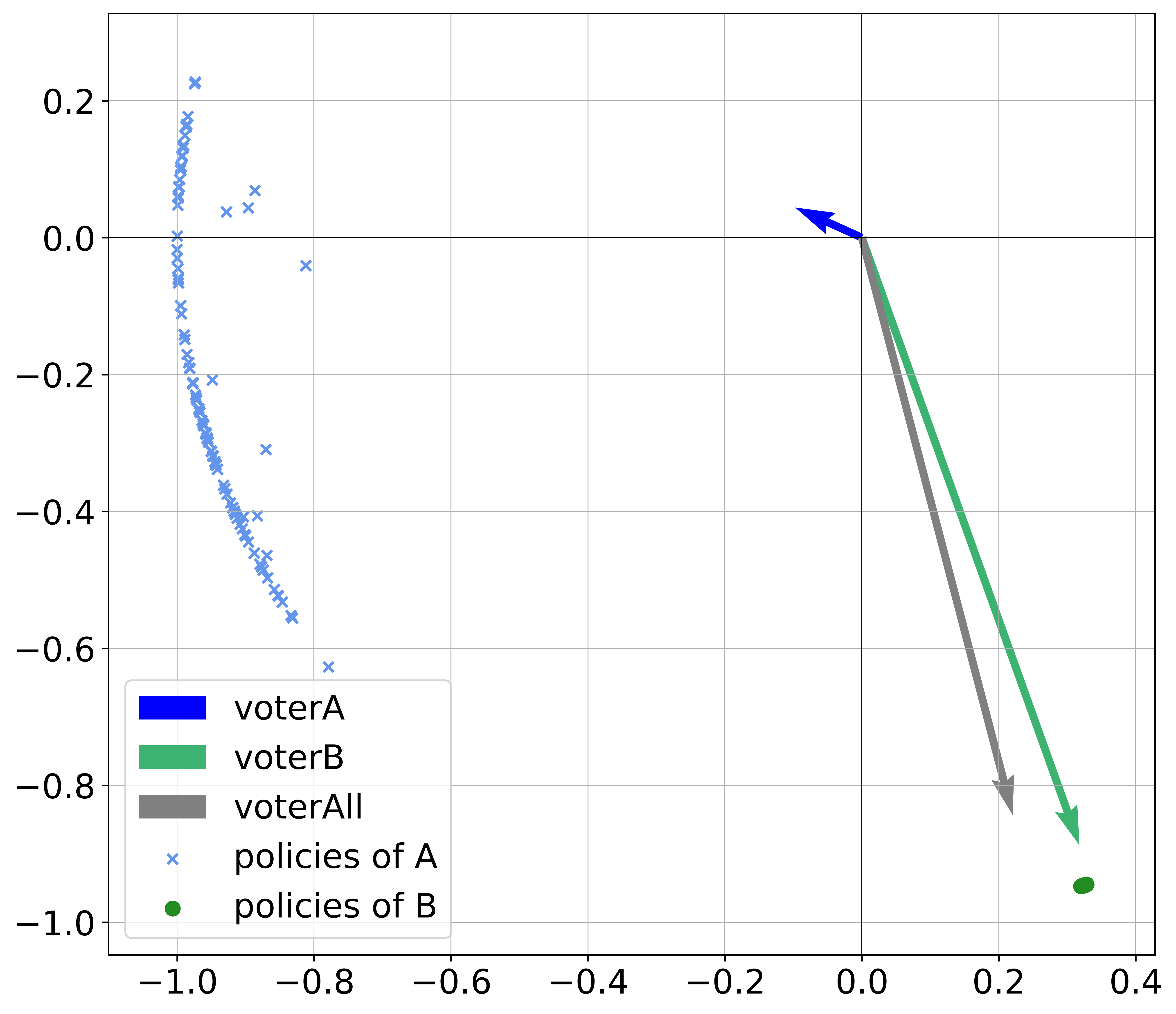}   
    %\caption{A case of multiple convergence under the non-consensus-reachable condition.}
    %\label{fig:multi-convergence-nonconsensus}
    \end{subfigure}
    \caption{Multiple convergence of decentralized projected gradient ascent. Upper figure: under consensus-reachable condition; lower figure: under the non-consensus-reachable condition. Through starting from a random policy profile under both conditions, the decentralized projected gradient descent does not converge to a unique profile. Arrows represents~$\bm{Q_A}$, $\bm{Q_B}$ and~$\bm{Q}$ for the voters' preference vectors of $V_A$, $V_B$ and $V$ respectively. Scatter plots denote the policy profiles in the convergence.}
    \label{fig:GA_experiments_multi-convergence}
\end{figure}

%=============================================================
\section{Discretization Comes to the Rescue}
\label{sec:discretization}
%=============================================================

Since our strategy domain is bounded, naturally it suggests that discretization might be helpful for getting an approximate PSNE. 
%Before proceeding with such an idea, note that in Sect.~\ref{subsec:non-consensus-reachable}, the functions  $\phi'(x)$, $\phi''(x)$, $\psi'(y)$ and $\psi''(y)$ are unbounded for $x = y = 1$, which implies that the pseudo-gradient $F$ with respect to~$x$ and~$y$ diverges to~$\infty$ for $x, y\rightarrow \infty$. 
Consider the pseudo-gradient $F(\theta_A, \theta_B) = (F_1(\theta_A, \theta_B), F_2(\theta_A, \theta_B))$ of the two players' payoff functions with respect to $\theta_A$ and $\theta_B$. The feasible space~$[0,\rho_A]\times [0, \rho_B]$ of~$(\theta_A, \theta_B)$ can be discretized using a Lipschitz constant $\mathcal{L}$ of~$F$, % in this range, 
and search for an approximate PSNE in a grid of size depending on~$\mathcal{L}$. 

Using the chain rule and applying $\odif{x}/\odif{\theta_A} = -\sin(\theta_A) = -\sqrt{1-x^2}$ and $\odif{y}/\odif{\theta_B} = -\sin(\theta_B) = -\sqrt{1-y^2}$, we derive the first derivatives of~$p_A$ with respect to $\theta_A$ and $\theta_B$ as  
%\vspace{-8pt}\\
%\begin{eqnarray*}
$\frac{\partial p_{A}}{\partial \theta_A} = \frac{\|Q\|}{8}\Bigl(-C_{1}\sqrt{1-x^2} +x\sqrt{1-C_{1}^{2}}\Bigr)$ and 
$\frac{\partial p_{A}}{\partial \theta_B} = \frac{\|Q\|}{8}\Bigl(C_{2}\sqrt{1-y^2} -y\sqrt{1-C_{2}^{2}}\Bigr)$, 
%\end{eqnarray*}
respectively. So, the pseudo-gradient $F$ in terms of $\theta_A$ and $\theta_B$ can be written as  
%\vspace{-8pt}\\
%\begin{equation*}\label{eq:pseudogradient_theta}
%\begin{aligned}
$F_{1}(\theta_A,\theta_B) = %\nabla_{x}f_{x}(x,y)\cdot (-\sqrt{1-x^2})= 
\|Q_{A}\|\Bigl(-p_{A}\sqrt{1-x^2}
+ \bigl(x - M(y)\bigr)\,\pdv{p_{A}}{\theta_A}\Bigr)$, %\\
$F_{2}(\theta_A,\theta_B) = %\nabla_{y}g_{y}(x,y)\cdot(-\sqrt{1-y^2}) = 
\|Q_{B}\|\Bigl((1 - p_{A})(-\sqrt{1-y^2})
+ \bigl(M(x) - y\bigr)\,\pdv{p_{A}}{\theta_B}\Bigr)$. 
%\end{aligned}
%\end{equation*}
To find the Lipschitz bound for the payoff function $R_A(\theta_A, \theta_B)$, 
we aim to find $\max_{\theta_A,\theta_B}|F_1(\theta_A, \theta_B)|$. 
Note that by the definition of~$K$ and~$L$, we have 
$M(y) = Ky+L\sqrt{1-y^2} = \cos(\rho_A+\rho_B)\cos(\theta_B)+\sin(\rho_A+\rho_B)\sin(\theta_B) = \cos(\rho_A+\rho_B-\theta_B)\in [-1, 1]$, 
and similarly $M(x)\in [-1, 1]$. Thus, we obtain   
%\begin{eqnarray*}
    $|F_1(\theta_A, \theta_B)| \leq \|Q_A\|[|p_A| + |x-M(y)| \left|\pdv{p_A}{\theta_A}\right| \Bigr]\leq 2\|Q_A\|$, 
%\end{eqnarray*}
in which we use the fact that $\left| \pdv[]{p_A}{\theta_A} \right| \leq \frac{\|Q\|}{8} \left(|C_1| + \sqrt{1-C_1^2}\right)\leq \frac{\|Q\|}{4}\leq \frac{1}{2}$. %$$
Similarly, we have $|F_2(\theta_A, \theta_B)| \leq 2\|Q_B\|$, so we obtain $\mathcal{L}\leq 2(\|Q_A\|+\|Q_B\|)$. % as the Lipschitz constant bound for the payoff functions which only depends on~$Q_A$ and~$Q_B$.
%Based on the existence of PSNE of the game and the Lipschitz constant bound $\mathcal{L}$ on the payoff functions, 
Thus, we devise a grid-based searching algorithm \texttt{GBA-PSNE} to find an approximate PSNE of the game.

\begin{algorithm}[ht]
\caption{Grid-Based Approx. $\varepsilon$-PSNE Search (\texttt{GBA-PSNE})}
\begin{algorithmic}[1]
\REQUIRE $Q_A,Q_B,Q$, accuracy parameter $\varepsilon>0$
\ENSURE a grid profile that is an $\varepsilon$-PSNE of the continuous game

\STATE Compute the Lipschitz constant $\mathcal{L}$ for $f_{\theta_A}$ and $g_{\theta_B}$.
\STATE Set the grid spacing $h$ so that $h\leq \varepsilon/(4\mathcal{L})$, and let $\hat{\varepsilon}\leftarrow 3\mathcal{L}h$.
\STATE Construct the uniform grids with spacing~$h$
\[
G_A=\{a_0,a_1,\ldots,a_{N-1}\}\subseteq [0,\rho_A],
\]
\[
G_B=\{b_0,b_1,\ldots,b_{N-1}\}\subseteq [0,\rho_B].
\]

\FOR{$j=0$ \TO $N-1$}
    \STATE Compute $M_A(j):=\max_{0\leq i\leq N-1} f_{\theta_A}(a_i,b_j)$;
    \STATE Compute the interval $I_A(j)$ of all indices $i$ satisfying
    $f_{\theta_A}(a_i,b_j)\geq M_A(j)-\hat{\varepsilon}$.
\ENDFOR

\FOR{$i=0$ \TO $N-1$}
    \STATE Compute $M_B(i):=\max_{0\leq j\leq N-1} g_{\theta_B}(a_i,b_j)$;
    \STATE Compute the interval $I_B(i)$ of all indices $j$ satisfying
    $g_{\theta_B}(a_i,b_j)\geq M_B(i)-\hat{\varepsilon}$.
\ENDFOR

\STATE Run Algorithm~\texttt{Sweep-Intervals} on the family of intervals
$\{I_A(j)\}_{j=0}^{N-1}$ and $\{I_B(i)\}_{i=0}^{N-1}$.
\STATE Let $(i,j)$ be the returned pair and return $(a_i,b_j)$
\end{algorithmic}
\end{algorithm}

\begin{algorithm}[t]
\caption{\!\!\texttt{Sweep-Intervals}$(\{I_A(j)\}_{j=0}^{N-1}, \,\{I_B(i)\}_{i=0}^{N-1})$}
\label{alg:Sweep-Intervals}
\begin{algorithmic}[1]
\REQUIRE $I_A(j)=[I^{|\leftarrow}_A(j),I^{\rightarrow|}_A(j)]$ \, $\forall j$;\, 
$I_B(i)=[I^{|\leftarrow}_B(i),I^{\rightarrow|}_B(i)]$ \, $\forall i$
\ENSURE a pair $(i,j)$ such that $i\in I_A(j)$ and $j\in I_B(i)$, if any.
\STATE Initialize empty lists $\mathrm{Add}[0],\ldots,\mathrm{Add}[N-1]$
and $\mathrm{Rem}[0],\ldots,\mathrm{Rem}[N-1]$.

\FOR{$i=0$ \TO $N-1$}
    \STATE Append $i$ to $\text{Add}[I^{|\leftarrow}_B(i)]$.
    \IF{$I^{\rightarrow|}_B(i)+1\leq N-1$}
        \STATE Append $i$ to $\text{Rem}[I^{\rightarrow|}_B(i)+1]$.
    \ENDIF
\ENDFOR

\STATE Initialize an empty ordered set $\mathcal{A}$.

\FOR{$j=0$ \TO $N-1$}
    \FOR{$i\in \text{Rem}[j]$}
        \STATE delete $i$ from $\mathcal{A}$.
    \ENDFOR
    \FOR{$i\in \text{Add}[j]$}
        \STATE insert $i$ into $\mathcal{A}$.
    \ENDFOR
    \STATE \COMMENT{$\mathcal{A} = \{ i\mid j\in I_B(i)\}$: the set of active indices at step~$j$}
    \STATE Let $t\gets \min\{ i\in\mathcal{A}\mid i\geq I^{|\leftarrow}_A(j)\}$.
    
    \IF{$t\neq \text{null}$ and $t\leq I^{\rightarrow|}_A(j)$}
    \STATE \textbf{return} $(t,j)$
    \ENDIF
\ENDFOR

\STATE \textbf{return} $\emptyset$ \COMMENT{``no pair found''}
\end{algorithmic}
\end{algorithm}

\begin{algorithm}[ht]
\caption{Ternary Best Response (\texttt{TBR})}
\begin{algorithmic}[1]
\REQUIRE Grid $\{z_k\}_{k=0}^{N-1}$, %(either $G_A$ or $G_B$), 
index interval $[L,R] \subseteq \{0,\dots,N-1\}$,
unimodal objective $f$. %=R_A$ or $f=R_B$. %evaluated only at grid points, unimodal in $k$ on $[L,R]$
\ENSURE $k^{\ast} = \argmax_{k\in[L,R]} f(k)$
\WHILE{$R-L>2$}
  \STATE $m_1 \gets L + \big\lfloor \frac{R-L}{3} \big\rfloor,\;\;
         m_2 \gets R - \big\lfloor \frac{R-L}{3} \big\rfloor$
  \IF{$f(m_1) \leq f(m_2)$}
     \STATE $L \gets m_1$
  \ELSE
     \STATE $R \gets m_2$
  \ENDIF
\ENDWHILE
\STATE return $\arg\max_{k\in\{L,L+1,\ldots,R\}} f(k)$% \textbf{ and its value}
\end{algorithmic}
\end{algorithm}

\begin{lem}\label{lem:TBR}
Algorithm \texttt{TBR} can find the maximizer of an unimodal function $f$ in $O\!\big(\log\big(\tfrac{b-a}{\varepsilon}\big)\big)$ steps. 
\end{lem}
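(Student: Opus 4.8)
The plan is to establish three ingredients: (i) a \emph{correctness invariant} stating that the index of a maximizer of $f$ always remains inside the working interval $[L,R]$ after every update; (ii) a \emph{geometric contraction} showing each loop iteration shrinks $R-L$ by a constant factor; and (iii) a \emph{translation} from the number of grid points $N$ to the quantity $(b-a)/\varepsilon$ appearing in the bound. Each loop iteration performs $O(1)$ comparisons, so bounding the iteration count suffices to bound the running time.

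First I would fix notation, writing $k^{\ast}$ for the index of a maximizer of the unimodal $f$ over the current $[L,R]$, so that $f$ is nondecreasing on $[L,k^{\ast}]$ and nonincreasing on $[k^{\ast},R]$. The heart of the correctness argument is the pruning step. If $f(m_1)\le f(m_2)$ with $m_1<m_2$, I would argue by contraposition: were $k^{\ast}<m_1$, unimodality would force $f$ to be nonincreasing on $[m_1,R]\supseteq\{m_1,m_2\}$, yielding $f(m_1)\ge f(m_2)$ and, under strict unimodality of the descending branch, $f(m_1)>f(m_2)$, contradicting the hypothesis; hence $k^{\ast}\ge m_1$, so setting $L\gets m_1$ preserves $k^{\ast}\in[L,R]$. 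The case $f(m_1)>f(m_2)$ is symmetric, gives $k^{\ast}\le m_2$, and makes $R\gets m_2$ safe. This maintains $k^{\ast}\in[L,R]$ as a loop invariant, so the final scan over the $\le 3$ surviving indices returns a true maximizer.

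Next I would analyze the interval length. With $m_1=L+\lfloor(R-L)/3\rfloor$ and $m_2=R-\lfloor(R-L)/3\rfloor$, each update discards a block of size $\lfloor(R-L)/3\rfloor$, so the new length satisfies $R'-L'=(R-L)-\lfloor(R-L)/3\rfloor=\lceil\tfrac23(R-L)\rceil$. Since the loop runs only while $R-L>2$, we have $R-L\ge 3$ and thus $\lfloor(R-L)/3\rfloor\ge 1$, guaranteeing strict progress and termination. Iterating this contraction from the initial length $N-1$ shows that after $t$ iterations $R-L\le (2/3)^t(N-1)+O(1)$, so the loop exits after $t=O(\log N)$ iterations. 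Finally I would connect $N$ to the stated bound: the grid $\{z_k\}_{k=0}^{N-1}$ discretizes $[a,b]$ with spacing $h=(b-a)/(N-1)$, and certifying an $\varepsilon$-accurate maximizer takes $N=\Theta((b-a)/\varepsilon)$, whence $O(\log N)=O(\log((b-a)/\varepsilon))$.

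The main obstacle I anticipate is the correctness invariant rather than the counting, which is routine. The delicate point is the handling of ties: a plateau of $f$ could in principle let the argmax drift out of $[L,R]$ during pruning, so one must either invoke \emph{strict} unimodality on each branch or track the maximum \emph{value} rather than a specific index. This is precisely what the reduced payoff profiles of Sect.~\ref{sec:PSNE_general_case} supply, since $f(x)$ is concave (hence strictly unimodal away from its peak) and $g(y)$ is quasi-concave, so TBR is applied exactly in the regime where the pruning step is valid.
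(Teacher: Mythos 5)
Your proof is correct and follows essentially the same route as the paper's (much terser) argument: unimodality-based pruning keeps a maximizer inside $[L,R]$, each iteration contracts the interval length by a factor of $2/3$, and after $O(\log N)=O\bigl(\log((b-a)/\varepsilon)\bigr)$ iterations a final scan of at most three indices returns the grid maximizer. Your closing observation about ties/plateaus is a genuine refinement that the paper's one-line proof glosses over (weak unimodality with a plateau straddling $m_1,m_2$ can indeed evict the argmax); it is harmless in the paper's application only because the reduced payoffs are shown (in the appendix) to have derivatives changing sign exactly once, i.e., they are strictly unimodal.
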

\begin{proof}
Each iteration in the while-loop shrinks the interval length by a constant factor~$2/3$, since
for unimodal $f$ one of the two outer thirds cannot contain the maximizer.
After $t=O(\log((b-a)/\varepsilon))$ iterations, we evaluate at most three remaining 
indices to return the exact grid maximizer.
\end{proof}

\begin{thm}\label{thm:grid-approx}
Algorithm \texttt{GBA-PSNE} finds an $\varepsilon$-PSNE in~$O(nk+ (k/\epsilon)\log(1/\varepsilon))$ time. 
\end{thm}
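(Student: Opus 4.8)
The plan is to separate the claim into a \emph{correctness} part---any profile the algorithm returns is a genuine $\varepsilon$-PSNE---and a \emph{complexity} part, and to treat the guarantee that the loop actually returns as a third, more delicate, ingredient resting on the existence results of Section~\ref{sec:PSNE_general_case}.

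First, for correctness I would quantify the discretization slack. The requirement $N \geq \lceil 2\max\{\rho_A,\rho_B\}\mathcal{L}/\varepsilon\rceil + 1$ makes the spacing $h = \max\{\rho_A,\rho_B\}/(N-1)$ satisfy $\mathcal{L}h \leq \varepsilon/2$, so the threshold $\hat{\varepsilon} = \varepsilon - \mathcal{L}h$ is positive and obeys $\hat{\varepsilon} + \mathcal{L}h = \varepsilon$. The bound $\mathcal{L} \leq 2(\|Q_A\|+\|Q_B\|)$ controls the magnitude of the angular payoff-gradients $|F_1|,|F_2|$, which makes $R_A(\cdot,\theta_B)$ and $R_B(\theta_A,\cdot)$ $\mathcal{L}$-Lipschitz in their own angle. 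Hence, if $(\hat{\mathbf{x}},\hat{\mathbf{y}})$ passes the test, letting $\theta_A^{\ast}$ be $A$'s continuous best response to $\hat{\mathbf{y}}$ and $\mathbf{x}_j$ the nearest grid angle (within $h$), we get $R_A(\mathbf{x}_j,\hat{\mathbf{y}}) \geq R_A(\theta_A^{\ast},\hat{\mathbf{y}}) - \mathcal{L}h$, so $A$'s true deviation gain is at most $\hat{\varepsilon}+\mathcal{L}h = \varepsilon$; symmetrically for $B$. This yields the $\varepsilon$-PSNE guarantee.

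Next, the complexity. Computing $Q_A,Q_B,Q$, their norms and the angles $\rho_A,\rho_B$ is $O(nk)$. Since $\rho_A,\rho_B \leq \pi$ and $\mathcal{L}\leq 4$, we have $N = O(1/\varepsilon)$, and building $G_A,G_B$ by rotating unit vectors in $\mathbb{R}^k$ costs $O(Nk)=O(k/\varepsilon)$. The crucial point is that each best-response maximization in the loop is over a \emph{unimodal} objective: by Theorems~\ref{thm:result_consensus} and~\ref{thm:non-consensus-reachable-PSNE}, $R_A$ is concave and $R_B$ quasi-concave in $\cos\theta$, and since $\theta\mapsto\cos\theta$ is monotone these compose to unimodal functions of the angle. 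Thus Lemma~\ref{lem:TBR} lets each of the $O(1)$ best-response calls per outer iteration use only $O(\log N)$ payoff evaluations, each an $O(k)$ inner-product computation. Over the $N=O(1/\varepsilon)$ outer iterations this is $O(Nk\log N)=O((k/\varepsilon)\log(1/\varepsilon))$, dominating grid construction; adding preprocessing gives $O(nk + (k/\varepsilon)\log(1/\varepsilon))$. Without \texttt{TBR}, a naive $O(N)$ scan for each best response would cost $O(k/\varepsilon^2)$, so exploiting unimodality is exactly what buys the $\log(1/\varepsilon)$ rather than a factor $1/\varepsilon$.

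The hard part will be showing the loop \emph{does} return, i.e.\ that some grid point $\mathbf{y}\in G_B$ makes the one-shot pass $\mathbf{y}\mapsto\hat{\mathbf{x}}\mapsto\hat{\mathbf{y}}$ pass the $\hat{\varepsilon}$-test. A finite best-response process need not admit an exact grid equilibrium, and best-response \emph{argmaxes} can jump, so the argument must be phrased through best-response \emph{values}, which remain Lipschitz in the opponent's angle because $R_A,R_B$ have bounded partial derivatives in \emph{both} arguments (as pointwise maxima of uniformly Lipschitz functions). I would invoke the existence of a continuous PSNE $(\theta_A^{\circ},\theta_B^{\circ})$ from Section~\ref{sec:PSNE_general_case}, process the grid point nearest $\theta_B^{\circ}$, and show that near this equilibrium $\hat{\mathbf{y}}$ stays within $O(\mathcal{L}h)$ of $\mathbf{y}$, so that $\hat{\mathbf{x}}$, optimal against $\mathbf{y}$, is still within $O(\mathcal{L}h)=O(\varepsilon)$ of optimal against $\hat{\mathbf{y}}$ and both measured gains fall below $\hat{\varepsilon}$. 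Pinning down this stability of the best-response value under a single best-response step, using only quasi-concavity rather than strict concavity, is where the real care is needed.
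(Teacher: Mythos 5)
Your first two ingredients coincide with the paper's own proof. For correctness, the paper runs exactly your rounding argument: any deviation $a^*\in[0,\rho_A]$ has a grid neighbor $a'$ with $|a^*-a'|\le h$, own-angle $\mathcal{L}$-Lipschitzness gives $R_A(a^*,\mathbf{y})\le R_A(a',\mathbf{y})+\mathcal{L}h$, and the grid $\hat{\varepsilon}$-best-response property plus $\hat{\varepsilon}+\mathcal{L}h=\varepsilon$ closes the bound (symmetrically for $B$). For complexity, the paper likewise charges $O(nk)$ to aggregating the preference vectors, $O(k/\varepsilon)$ to building $G_A,G_B$ by planar rotations, and replaces the linear scans inside the loop by \texttt{TBR} (Lemma~\ref{lem:TBR}), justified by the unimodality of $R_A(\cdot,\mathbf{y})$ and $R_B(\hat{\mathbf{x}},\cdot)$ coming from the concavity/quasi-concavity analysis, yielding $O(N\log N)$ payoff evaluations at $O(k)$ each, i.e.\ $O((k/\varepsilon)\log(1/\varepsilon))$ in total.

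Your third ingredient---a guarantee that some iteration of the loop actually passes the $\hat{\varepsilon}$-test and returns---is where you depart from the paper, and you should know the paper contains no such argument: its proof is conditional, of the form ``\emph{if} the algorithm returns a profile that is an $\hat{\varepsilon}$-PSNE on the grid, then that profile is an $\varepsilon$-PSNE of the continuous game.'' Your instinct that this is the delicate step is sound, and two concrete obstacles confirm it. First, rounding a continuous PSNE $(\theta_A^{\circ},\theta_B^{\circ})$ (guaranteed by Theorems~\ref{thm:result_consensus} and~\ref{thm:non-consensus-reachable-PSNE}) to its nearest grid pair costs roughly $3\mathcal{L}h$ (two coordinate roundings plus one comparison, using Lipschitzness in both arguments), and $3\mathcal{L}h\le\hat{\varepsilon}=\varepsilon-\mathcal{L}h$ is \emph{not} implied by the algorithm's requirement $h\le\varepsilon/(2\mathcal{L})$; one would need a constant-factor finer grid, say $h\le\varepsilon/(4\mathcal{L})$. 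Second, even with a finer grid, the loop never tests the rounded pair itself: for each $\mathbf{y}$ it only tests $(\hat{\mathbf{x}}(\mathbf{y}),\hat{\mathbf{y}}(\hat{\mathbf{x}}(\mathbf{y})))$, and under mere quasi-concavity the best-response argmax $\hat{\mathbf{y}}$ can jump far from $\mathbf{y}$, so the value-stability argument you sketch still needs a bound on that jump or a different certificate. In short: your parts one and two reproduce the paper; your part three, if completed, would close a gap the paper leaves open rather than reproduce anything in it.
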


\begin{proof}
(sketch) 
%Parameterized by $\theta_A$ and $\theta_B$ under the condition that $r_A = r_B = 1$, we have shown that the payoff functions $R_A(\mathbf{z}_A, \mathbf{z}_B) = f_{\theta_A}(\theta_A, \theta_B)$ and $R_B(\mathbf{z}_A, \mathbf{z}_B) = g_{\theta_B}(\theta_A, \theta_B)$ are $\mathcal{L}$-Lipschitz continuous in each argument over $[0, \pi/2]$ for $\mathcal{L}\leq 2\max\{\|Q_A\|, \|Q_B\|\}$ which only depends on~$Q_A$ and $Q_B$. 
%Let $G = G_A = G_B = \{0+ih: i=0,1,\ldots,N-1\}$ be a uniform grid with spacing $h %\frac{\pi}{2(N-1)}
Given the grid $G_A\times G_B$ from the algorithm, 
%\leq \frac{\varepsilon}{2\mathcal{L}}$. 
with a slight abuse of notation, let $(a_i, b_j) \in G_A \times G_B$ be a strategy profile that is an $\hat{\varepsilon}$-PSNE \emph{on the grid} such that
%\[
$R_A(a_i, b_j):=f_{\theta_A}(a_i, b_j) \geq \max_{a_{\ell} \in G} f_{\theta_A}(a_{\ell}, b_j) - \hat{\varepsilon} \mbox{ and }
R_B(a_i, b_j) := g_{\theta_B}(a_i, b_j)\allowbreak \geq \max_{b_\ell \in G} g_{\theta_B}(a_i, b_\ell) - \hat{\varepsilon}$.
%\]
Since $h \leq \frac{\varepsilon}{2\mathcal{L}}$ and $\hat{\varepsilon} = \varepsilon - \mathcal{L}h$, we can derive that $(a_i, b_j)$ is an $\varepsilon$-PSNE in the game.
%Let $(a_i, b_j) \in G \times G$ be an $\varepsilon'$-PSNE on the grid. 
%We can show that it also satisfies the requirement of an $\varepsilon$-PSNE %in our game 
The time complexity of the algorithm can be shown to be~$O(nk+ k/\varepsilon^2)$. In addition, since $R_A$ and $R_B$ are one-dimensional quasi-concave functions for fixed $r_A=r_B=1$, they are unimodal so that Lines~5,6,9,10 of Algorithm \texttt{GBA-PSNE} can be computed using Algorithm~\texttt{TBR} for the ternary search and 
binary searches for the endpoints of intervals of best-responses. The original grid-search time $O(N^2) = O(1/\epsilon^2)$ can then be shrunk to $O(N\log N) = O((1/\epsilon)\log(1/\epsilon))$. %\ref{appendix:proof_theorem_4}).

\iffalse
Let $a^*\in [0,\rho_A]$ be any unilateral deviation for party~$A$. %and $x^*$ is the corresponding policy vector rotating $Q_A/\|Q_A\|$ toward $Q$ for angle $\theta_A^*$. 
Since $G$ is a uniform grid with spacing $h$, there exists $a' \in G$ such that $|a^* - a'| \leq h$. By Lipschitz continuity of~$f_{\theta_A}$ in $\theta_A$, we have $f_{\theta_A}(a^*, b_j) \leq f_{\theta_A}(a_k, b_j) + \mathcal{L} h$.
Then by assumption, since $(a_i, b_j)$ is an $\varepsilon'$-best response on the grid, $f_{\theta_A}(a_i, b_j) \geq f_{\theta_A}(a_k, b_j) - \varepsilon'$.
Together, we have $f_{\theta_A}(a_i, b_j) \geq f_{\theta_A}(a^*, b_j) - (\varepsilon' + \mathcal{L}h) = f_{\theta_A}(a^*, b_j) - \varepsilon$,
in which we substitute $\varepsilon' + \mathcal{L}h$ by~$\varepsilon$. 
Thus party~$A$ cannot gain more than $\varepsilon$ by deviating to any $a^* \in [0,\pi/2]$.
Similarly, for party~$B$, we can derive that for any $b^* \in [0,1]$, there exists $b_\ell \in G$ with $|b^* - b_\ell| \leq h$, and then $g_{\theta_B}(a_i, b_j) \geq g_{\theta_B}(a_i, b_\ell) - \varepsilon' \geq g_{\theta_B}(a_i, b^*) - \mathcal{L}h - \varepsilon' = g_{\theta_B}(a_i, b^*) - \varepsilon$.
Therefore, $(a_i, b_j)$ is an $\varepsilon$-PSNE.% in the game.
%\qed
\fi
\end{proof}

\vspace{-17pt}
%%%%%%%%%%%%%%%%%%%%%%%%%%%%%%%%%%%%%%%%%%%%%%%%%%%%%%%%%%%%%%%%
\section{Concluding Remarks} % and Future Work}
%\section{Future Work}
\label{sec:future}
%%%%%%%%%%%%%%%%%%%%%%%%%%%%%%%%%%%%%%%%%%%%%%%%%%%%%%%%%%%%%%%%

%\iffalse
In this paper, we %advance the understanding of PSNE in a generalized two-party policy competition game. It 
extend prior work by conceptualizing real-valued policies within a compact one or multi-dimensional Euclidean space, transforming the problem from discrete choices to a continuous optimization. The isotonicity hypothesis concerning winning probability is validated through voting simulations via different functions voters refer to cast their votes, consistently demonstrating a monotonic relationship between utility difference and winning probability. Furthermore, we establish the existence of PSNE under various conditions: it provides a closed-form solution for the one-dimensional policy space and formally proves PSNE existence in the multi-dimensional setting. %under both consensus-reachable and non-consensus-reachable conditions, leveraging Kakutani's fixed-point theorem. 
Our work also serves as an extension beyond the classic result that multi-dimensional Downsian competition which needs not admit any equilibrium~\cite{plott1967,xefteris2017}, and complements previous distance-based spatial models by considering the direction and strength of policies with respect to the voters. Our model predicts max-intensity platforms with directional moderation which captures one facet of extreme politics.

Beyond existence, this work offers algorithmic insights. We demonstrate that the game's pseudo-gradient mapping is generally not monotone, disproving cocoercivity and suggesting that standard convergence guarantees for gradient-based methods do not apply. Despite this theoretical challenge, experimental simulations show that a decentralized projected gradient ascent algorithm converges rapidly to an approximate PSNE in most cases. 
%, highlighting a discrepancy between theoretical guarantees and empirical performance. 
To provide a guaranteed approach, a polynomial-time grid-based search algorithm (\texttt{GBA-PSNE}) is proposed, which systematically finds an $\varepsilon$-approximate PSNE by discretizing the policy domain and leveraging a derived Lipschitz constant. These findings collectively contribute to the analysis of continuous non-cooperative games and offer practical frameworks for understanding stable policy outcomes in political competition.
%\fi
%Building upon the established foundational insights, 

We suggest promising directions for future research below. %, extending the model's scope and its algorithmic capabilities.
\paragraph{Winning probability functions.} %While we have provided a grid-based search algorithm for finding approximate PSNE, other algorithms, especially 
%Gradient-based algorithms for finding local Nash equilibria could be another direction to explore. In addition, 
Inspired by the dueling bandit setting, we choose a linear function for computing the winning probabilities. It will be interesting to consider other isotonic winning probability functions and extend our findings herein. 
\paragraph{Multi-party competitions.} %We focused on two parties and pure strategies, but 
Political competitions can involve three or more competitors under uncertainty, so it would be interesting to explore how our framework can be extended to such scenarios. 
\paragraph{Beyond full information.} %Our game is fully observable. 
When voters' preferences are unobservable, policy making requires the elicitation of voters’ preferences, and voters may vote strategically. It would be interesting to see how the isotonicity assumption and the following analysis extend to such settings.

%%%%%%%%%%%%%%%%%%%%%%%%%%%%%%%%%%%%%%%%%%%%%%%%%%%%%%%%%%%%%%%%%%%%%%%%

%%% The acknowledgments section is defined using the "acks" environment
%%% (rather than an unnumbered section). The use of this environment 
%%% ensures the proper identification of the section in the article 
%%% metadata as well as the consistent spelling of the heading.

\begin{acks}
This work is funded and supported by the National Science and Technology Council, Taiwan, under grant nos. NSTC 112-2221-E-032-018-MY3, NSTC~113-2628-HA49-002-MY3, and NSTC~114-2221-E005-058. This work is also partially supported by NSTC 114-2634-F-005-002- through the Smart Sustainable New Agriculture Research Center (SMARTer). 
\end{acks}

%%%%%%%%%%%%%%%%%%%%%%%%%%%%%%%%%%%%%%%%%%%%%%%%%%%%%%%%%%%%%%%%%%%%%%%%

%%% The next two lines define, first, the bibliography style to be 
%%% applied, and, second, the bibliography file to be used.

\bibliographystyle{ACM-Reference-Format} 
\bibliography{policy_competition}

\end{document}